\documentclass[sigconf, authorversion, nonacm]{acmart}
\acmConference[Correctness 2026]{10th International Workshop on
  Software Correctness for HPC Applications}{November 16,
  2026}{Chicago, IL}

\usepackage{xcolor,stmaryrd,graphicx,siunitx,doi,tagging}
\usepackage{listings}
\lstdefinestyle{arxivC}{
    backgroundcolor=\color{gray!5},   
    keywordstyle=\color{blue},
    stringstyle=\color{purple},
    commentstyle=\color{green!60!black},
    basicstyle=\ttfamily\small,
    breaklines=true,                 
    numbers=left,
    numbersep=8pt,     % Distance between line numbers and code
    xleftmargin=40pt   % Indents the whole listing so line numbers sit inside the margin
}

\graphicspath{{images/}}
\newtheorem{theorem}{Theorem}

\DeclareFontFamily{U}{cmtex}{}
\DeclareFontShape{U}{cmtex}{m}{n}{
 <-> cmtex10
}{}

\newcommand{\mysf}[1]{\textup{\textsf{#1}}}  % upright sans-serif
\newcommand{\code}[1]{\textup{\texttt{#1}}}
\newcommand{\pcstyle}[1]{\textcolor{black}{\tt #1}}

\newcommand{\lb}{\code{\char`\{}}
\newcommand{\rb}{\code{\char`\}}}
\newcommand{\U}{{\ttfamily\symbol{'137}}}
\newcommand{\lp}{\code{(}}
\newcommand{\rp}{\code{)}}

\newcommand{\bs}{\code{\char`\\}}

\newcommand{\ra}{\rightarrow}

\newcommand{\mplite}{\textup{\textsc{MPlite}}}

\newcommand{\gtxt}[1]{\ensuremath{\mathbf{\color{red}{#1}}}}
\newcommand{\gsep}{\gtxt{\hspace{.333ex}|\hspace{.333ex}}}
\newcommand{\gas}{\gtxt{::=}}

\newcommand{\Expr}{\mysf{Expr}} % set o expressions
\newcommand{\T}{\textit{true}}
\newcommand{\F}{\textit{false}}
\newcommand{\PID}{\mysf{PID}}

\newcommand{\psend}{\pcstyle{send}}
\newcommand{\precv}{\pcstyle{recv}}
\newcommand{\pto}{\pcstyle{to}}
\newcommand{\pfrom}{\pcstyle{from}}

\newcommand{\mpiinit}{\code{MPI\U{}Init}}
\newcommand{\mpifinalize}{\code{MPI\U{}Finalize}}
\newcommand{\mpicommworld}{\code{MPI\U{}COMM\U{}WORLD}}

\newcommand{\mpisend}{\code{MPI\U{}Send}}
\newcommand{\mpirecv}{\code{MPI\U{}Recv}}
\newcommand{\mpianytag}{\code{MPI\U{}ANY\U{}TAG}}
\newcommand{\mpisendrecv}{\code{MPI\U{}Sendrecv}}
\newcommand{\mpianysource}{\code{MPI\U{}ANY\U{}SOURCE}}
\newcommand{\mpibcast}{\code{MPI\U{}Bcast}}
\newcommand{\mpigather}{\code{MPI\U{}Gather}}
\newcommand{\mpiisend}{\code{MPI\U{}Isend}}
\newcommand{\mpiirecv}{\code{MPI\U{}Irecv}}
\newcommand{\mpiwait}{\code{MPI\U{}Wait}}

\newcommand{\kwsty}[1]{\textup{\textcolor{black}{\texttt{#1}}}}
\newcommand{\kwparam}{\kwsty{param}}
\newcommand{\kwvar}{\kwsty{var}}

\newcommand{\kwwhile}{\kwsty{while}}

\newcommand{\kwchoose}{\kwsty{choose}}
\newcommand{\kwassert}{\kwsty{assert}}
\newcommand{\kwassume}{\kwsty{assume}}
\newcommand{\kwsend}{\kwsty{send}}
\newcommand{\kwto}{\kwsty{to}}
\newcommand{\kwrecv}{\kwsty{recv}}
\newcommand{\kwfrom}{\kwsty{from}}
\newcommand{\semi}{\kwsty{;}}
\newcommand{\comma}{\kwsty{,}}
\newcommand{\coleq}{\kwsty{:=}}
\newcommand{\eq}{\kwsty{=}}
\newcommand{\st}{\kwsty{|}}
\newcommand{\bbar}{\kwsty{||}}
\newcommand{\ident}[1]{\textit{#1}}
\newcommand{\fun}[1]{\textsf{#1}}
\newcommand{\vpid}{\fun{pid}}
\newcommand{\vNP}{\fun{NP}}

\newcommand{\nummsg}{\fun{nummsg}}
\newcommand{\msginv}{\fun{msginv}}
\newcommand{\slevel}{\fun{slevel}}
\newcommand{\rlevel}{\fun{rlevel}}
\newcommand{\HB}{\textsf{HB}}

\newcommand{\anummsg}{\code{nummsg}}
\newcommand{\amsginv}{\code{msginv}}
\newcommand{\aslevel}{\code{slevel}}

\newcommand{\nregions}{\code{nregions}}
\newcommand{\region}{\code{region}}
\newcommand{\vmnp}{\code{VM\U{}NP}}

\newcommand{\vmlevel}{\code{VM\U{}level}}

\newcommand{\vmfc}{VMFC} % name of tool
\newcommand{\Wp}{\textsf{Wp}} % the Frama-C Plug-in

\newcounter{lc}  % global variable for line count
\DeclareRobustCommand{\showline}{%
  \stepcounter{lc}%
  {\makebox[1mm][r]{\textcolor{blue}{\scriptsize\arabic{lc}}}\hspace{1.5mm}}%
}

\newenvironment{program}%
  {%
    \setcounter{lc}{0}%
    \begin{math}%
      \begin{array}[t]{@{\showline}l@{}}%
  }%
  {%
      \end{array}%
    \end{math}%
  }

\newcommand{\proofloc}{Appendix \ref{sec:proof}}

\begin{document}

\title{Parameterized Verification of Deterministic MPI Programs}

\author{Stephen F.\ Siegel}
\email{siegel@udel.edu}
\orcid{0000-0001-9359-3332}
\affiliation{%
  \institution{University of Delaware}
  \city{Newark}
  \state{Delaware}
  \country{USA}
}

\begin{abstract}
  We consider the problem of verifying a message passing program in
  which the number of processes is a parameter \mysf{NP} and each
  process knows its unique ID.  Processes communicate using send and
  receive commands which specify a single destination or source.  To
  verify the program, the user provides functions specifying the
  number of messages sent from process $i$ to process $j$, the
  \emph{level} of each communication event in the happens-before
  hierarchy, and a fact that holds for the $k^\textit{th}$ message
  sent from $i$ to $j$.  These are used to transform the program to a
  parameterized sequential program which can be verified using any
  techniques appropriate for such programs.  We realize this approach
  in an extension to Frama-C/\Wp{} to verify C/MPI programs.
\end{abstract}

\keywords{message-passing, parameterized verification, deductive 
  verification, MPI, Frama-C}

\maketitle 

\section{Introduction}
\label{sec:intro}

We consider message-passing programs in which the number of processes,
\vNP, occurs as a parameter.  The program is executed by specifying a
concrete value for \vNP.  All $\vNP$ processes then execute the program
code, interacting only by sending and receiving messages.  A process
can read \vNP{} as well as its \vpid, a unique process ID number.
The goal of \emph{parameterized verification} is to verify correctness
properties, such as absence of deadlock or assertion violations, for
all $\vNP\geq 1$.

The Message Passing Interface (MPI) \cite{mpi-forum:2015:mpi31} is a
widely-used API for writing such programs, especially in high
performance computing.  Despite its many bells and whistles, the basic
MPI model is the one described above.  Many important algorithms
can be expressed using only a handful of MPI operations, such as
\mpisend, \mpirecv, and \mpisendrecv.

There is growing interest in verification in the scientific computing
community (e.g., \cite{nsf:2026:cs2}), especially with the advent of
AI-generated code.  Several projects have focused on verifying MPI
programs.  Model checking, for some finite number of values of $\vNP$,
is one popular approach.  But model checking suffers from the state
explosion problem: the blowup in the number of states as $\vNP$
increases.  Recourse to the ``small scope hypothesis'' leaves many
practitioners unconvinced, as their production runs may involve
thousands of processes, and defects often manifest only at large and
seemingly random process counts.

Significant progress has been made on verifying \emph{deterministic}
message-passing programs.  In these programs, each receive
operation specifies a single source process from which it will accept
a message.  In MPI, this prohibits the use of \mpianysource{}.  Many
(probably most) MPI programs can be written using a deterministic
subset, and programmers are strongly encouraged to avoid
nondeterministic constructs unless absolutely necessary.

Deterministic programs are easier to verify.  In particular, given any
input, one need only examine a single interleaving to determine if a
program deadlocks or violates an assertion.  In theory, this could
solve the state explosion problem.  In practice, performance of model
checking tools still degrades for other reasons (such as branch
explosion from the input space), and does not come close to the scale
used in practice.

There has been recent progress on parameterized verification of
deterministic message-passing programs.  Notably, an approach based on
the deductive program verification tool Dafny \cite{dafny-web,
  leino:2010:dafny} is described in \cite{fedchin-etal:2026:dafnympi}.
Following this approach, the user implements an algorithm in an
extension of the Dafny language which supports MPI.  The
implementation should specify an integer ID for each message and use
this ID as the tag in point-to-point operations.  The user provides
additional information such as functions from tags to the source,
destination, and message content; functions specifying occurrences of
collective function calls; a Dafny specification of the expected
output of the program; and additional information, such as loop
invariants, enabling the verifier to prove that the program is
deadlock-free and produces the expected result.  The program can be
proved correct for any number of processes, and for any inputs.  The
tool can also generate a Python implementation that uses the mpi4py
and NumPy packages to implement MPI and array operations,
respectively.

In this paper, we advance this line of research by simplifying and
generalizing in several directions.  Intead of generating a Python
implementation, our approach consumes a C/MPI program, without regard
to how that program was produced.  The user is not required to use
tags in any particular way, nor is the program modified for
verification.  Instead, the user inserts annotations (structured
comments) into the code to provide the additional information needed
to verify the program.  Using this information, the program is
automatically transformed to a sequential C program using three
operations that are commonly supported by C verification tools:
nondeterministic choice, assertions, and assumptions. The sequential C
program could be verified using any appropriate verification tool for
sequential C.

Here, we explore deductive verification using Frama-C
\cite{cuoq-etal:2012:frama-c, kosmatov-etal:2024:frama-c} and its
\Wp{} plug-in \cite{baudin-etal:2025:wp-manual,
  blanchard-etal:2024:wp}.  We have extended Frama-C's ACSL
specification language to enable parameterized verification of C/MPI
programs that are restricted to the deterministic subset of MPI
consisting of \mpisend, \mpirecv, and \mpisendrecv, but \emph{not}
\mpianysource.  The approach is modular: procedures can be verified
individually, and the body of a procedure may be broken down into
distinct communication-closed regions, each with its own
specification.  A prototype tool transforms the C/MPI program with
extended ACSL specifications into a sequential C program with ordinary
ACSL specifications.  This sequential program is then analyzed by
Frama-C/\Wp{}, which generates verification conditions that are
discharged by automated theorem provers.

We have applied this technique to yield complete proofs of five C/MPI
programs.  These programs are verified to be free of deadlocks and
assertion violations, to terminate on all inputs, and to compute the
expected results, without any bounds (other than those imposed by C's
type system) on the number of processes, number of discrete points, or
other parameters.

\section{Verification Approach}
\label{sec:transform}

In this section we describe the basic approach using a toy,
untyped message-passing language we call \mplite.

\subsection{A simple message-passing language}

Fig.\  \ref{fig:mplite} presents an abstract grammar for \mplite.
Values include integers, reals, and arrays.  Expressions are total,
deterministic, side-effect-free functions of their arguments.  Rather
than specify an expression grammar, we will use operators whose
meanings are, hopefully, obvious.  Where a boolean condition is
required, $0$ is interpreted as \F{} and any other value as \T.

\begin{figure}
  \centering
  \begin{tabular}{@{}r@{\hspace{6pt}}c@{\hspace{6pt}}l@{}}
    \textit{program}
    & \gas & \gtxt{(}\kwparam\ $\overline{x}$\code{;}\gtxt{)?}
             \gtxt{(}\kwvar\ $\overline{x}$\code{;}\gtxt{)?}
             \gtxt{(}\kwassume\ $e$\code{;}\gtxt{)?}
             $s$\gtxt{*}\\
    $\overline{x}\,\gtxt{\in}\,\textit{idlst}$
    & \gas &  $x$ \gtxt{(}\code{,} $x$\gtxt{)*} \\
    $s\,\gtxt{\in}\,\textit{stmt}$
    & \gas & $x$\code{:=}$e$\code{;}
             \gsep{} $x$\code{[}$e$\code{]}\code{:=}$e$\code{;}
             \gsep{} \code{if} \code{(}$e$\code{)} $s$
             \gtxt{(}\code{else} $s$\gtxt{)?}\\
    \multicolumn{3}{l}{
    \hspace{3ex} \gsep{} \code{while} \code{(}$e$\code{)} $s$
    \gsep{} \lb\ $s$\gtxt{*}\ \rb\
    \gsep{} \code{choose} $x$\ \st\ $e$\code{;}
    }\\
    \multicolumn{3}{l}{
    \hspace{3ex} \gsep{} \code{assert} $e$\code{;}
    \gsep{} \code{send} $e$ \code{to} $e$\code{;}
    \gsep{} \code{recv} $x$ \code{from} $e$\code{;}
    }
  \end{tabular}
  \begin{center}
    $x$ $\gtxt{\in}$ $\textit{ID}$
    \hspace{2em}
    $e$ $\gtxt{\in}$ $\Expr$
  \end{center}
  \caption{\mplite{} syntax. $\textit{ID}$ is a set of identifiers.
    $\Expr$ is a set of deterministic side-effect-free expressions.
    The reserved parameter $\vNP$ evaluates to the number of processes,
    and the reserved read-only variable $\vpid$ returns the unique ID
    of this process.}
  \label{fig:mplite}
\end{figure}

A program begins with optional parameter and variable declarations.
Parameters are read-only and \emph{universal}: they have the same
value on every process.  The reserved parameter $\vNP$ holds the
number of processes executing the program.  Variables are
process-local: each process has its own store assigning values to
variables.  Processes have unique IDs in $\PID=\{0,\ldots,\vNP-1\}$;
the reserved read-only variable \vpid{} holds a process's ID.  The
declarations are followed by an optional assumption, which places
restrictions on the possible parameter values and initial variable
values.

Statements include assignment (to a variable or array element),
\emph{if-else}, \emph{while} loops, and compound statements.
Statement $\kwchoose\ x\ \st\ e\semi$ assigns some value to $x$ that
satisfies $e$.  If there is no such value, execution halts.

The command ``\kwsend\ \ident{data} \pto\ \ident{dest}'' sends the
value of \ident{data} to the process with ID \ident{dest}.  There is
one FIFO channel for each ordered pair of processes $s\ra d$ and the
effect of \kwsend{} is to enqueue the message on the channel for which
$s$ is the ID of the sender and $d$ is \textit{dest}.  The channels
are unbounded, so \psend{} never blocks.  Command ``\precv\
\textit{buf}\ \pfrom\ \textit{source}'' removes the oldest buffered
message originating from \textit{source} and stores it in variable
\textit{buf}; this command blocks until a message becomes available.
A \textit{dest} or \textit{source} not in $\PID$ results in a
no-op.

A \emph{process state} includes a program counter and assigns a value
to each variable.  A (global) \emph{state} comprises a process state
for each process, an assignment of values to the parameters (which do
not change during an execution), and a finite sequence of values for
each $(s,d)\in\PID\times\PID$ specifying the state of channel
$s\ra d$.  In an \emph{initial state}, every process is at the program
start location, all channels are empty, and the values assigned to
parameters and variables cause the assumption (if present) to evaluate
to $\T$.

An \emph{execution prefix} is a finite or infinite sequence of
alternating states and transitions, starting from an initial state.
Each transition is an atomic step taken by one process and updates the
state as described above.  An \emph{execution} is a maximal execution
prefix.
% , i.e., one which cannot be extended to a longer execution prefix.

An execution prefix is \emph{correct} if, for all statements of the
form \code{assert} $\phi$ occurring on the prefix, $\phi$ evaluates to
$\T$.  A program is \emph{correct} if all of its executions are
correct, and in each execution, all processes terminate normally,
i.e., by completing execution of the last statement.

\newcommand{\vsum}{\ident{sum}}
\newcommand{\vleft}{\ident{left}}
\newcommand{\vright}{\ident{right}}
\newcommand{\visRealArray}{\fun{isRealArray}}

\begin{figure}
  \hspace{3ex}
  \begin{program}
    \kwparam\ A\semi\\
    \kwvar\ i\comma\ x\comma\ \vsum\comma\ \vleft\comma\ \vright\semi\\
    \kwassume\ \visRealArray(A,\vNP)\ \wedge\ x=A[\vpid]\semi\\
    \vleft\ \coleq\ (\vpid+\vNP-1)\%\vNP\semi\\
    \vright\ \coleq\ (\vpid+1)\%\vNP\semi\\
    i\ \coleq\ 1\semi\\
    \vsum\ \coleq\ x\semi\\
    \kwwhile\ \lp\ i<\vNP\ \rp\ \lb\\
    \ \ \kwsend\ x\ \kwto\ \vleft\semi\\
    \ \ \kwrecv\ x\ \kwfrom\ \vright\semi\\
    \ \ \vsum\ \coleq\ \vsum+x\semi\\
    \ \ i\ \coleq\ i+1\semi\\
    \rb\\
    \kwassert\ \vsum=\sum_{i=0}^{\vNP-1}A[i]\semi
  \end{program}
  \caption{\code{cycsum}: cyclic sum reduction to all processes.
    Each process stores a real number in $x$, then repeatedly sends to
    its left, receives from its right, and adds the received value to
    $\vsum$.  At the end, each process holds the sum of the original
    $x$ values.}
  \label{fig:cycsuma}
\end{figure}

An example \mplite{} program is shown in Fig.\  \ref{fig:cycsuma}.  The
goal of this program is to compute the sum, over all processes $i$,
of the value of variable $x$ on process $i$.  The result is stored in
variable $\vsum$ on every process.   The goal is achieved by cycling
the $x$ values around a ring as each process accumulates their sum.

Lines 1, 3, and 14 specify the expected behavior of the program.  A
parameter $A$ is assumed to be an array of reals of length $\vNP$.
This is the same array on every process, as $A$ is a parameter.
Furthermore the variable $x$ is assumed to be element $\vpid$ of $A$.
The assertion claims that \textit{sum} is the sum of the elements of
$A$.  The program is correct.  Note correctness requires that $+$ is
associative and commutative, so would not necessarily hold for
floating-point arithmetic.

\subsection{The transformation}

We now show how to transform an \mplite{} program to a
sequential program in such a way that the correctness of the
sequential program implies the correctness of the original program.
The sequential language is the same as the parallel one, except there
are no \kwsend{} or \kwrecv{} statements, and \vpid{} and \vNP{} are
not reserved.  The result of this transformation on \code{cycsum}
is shown in Fig.\  \ref{fig:transformb}.

\newcommand{\vsc}{\ident{sc}}
\newcommand{\vrc}{\ident{rc}}
\newcommand{\vlevel}{\ident{level}}

\begin{figure}
  \(
  \begin{array}[t]{l@{\hspace{0.5ex}}c@{\hspace{1.5ex}}l}
    \nummsg(s,d)
    &\eq& d=(s+\vNP-1)\%\vNP\; ?\; \vNP-1 : 0 \semi\\
    \msginv(s,d,j,t) &\eq& t=A[(s+j)\%\vNP] \semi
  \end{array}
  \)
  \begin{flushleft}
    (a) Number of messages from $s$ to $d$ and message invariant.
  \end{flushleft}
  \vspace{2.5ex}

  \begin{center}
    \includegraphics[scale=.75]{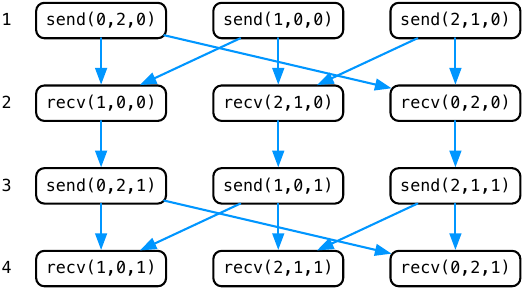}
  \end{center}
  \begin{flushleft}
    (b) Happens-before relation with levels, $\vNP=3$.  $\HB$ is
    the transitive closure of the relation shown.
  \end{flushleft}
  \vspace{-.5ex}
    
  \[
    \begin{array}{rcl}
      \slevel(s,d,j) & \eq & 2*j+1 \semi\\
      \rlevel(s,d,j) & \eq & 2*j+2 \semi
    \end{array}
  \]
  \begin{flushleft}
    (c) Level of each send, level of each receive.
  \end{flushleft}
  \vspace{-1.5ex}
  
  \[
    \fun{AI}(x,y)\ \equiv\ \kwassert\ x<y\semi\ x\ \coleq\ y\semi
  \]
  \begin{flushleft}
    (d) Macro to check assignment increases value.
  \end{flushleft}
  \caption{Information used to transform \code{cycsum}.}
  \label{fig:transforma}
\end{figure}

\begin{figure}
  \hspace{2ex}
  \begin{program}
    \kwparam\ \vNP\comma\ A\semi\\
    \begin{array}[t]{@{}l@{}}
      \kwvar\ \vpid\comma\ i\comma\ x\comma\ \vsum\comma\ \vleft\comma\
      \vright\comma\ \vsc\comma\ \vrc\comma\ \vlevel\semi
    \end{array}\\
    \begin{array}[t]{@{}l@{}}
      \kwassume\ 0\leq\vpid<\vNP\ \wedge\ \\
      \ \ \visRealArray(A,\vNP)\ \wedge\ 
      x=A[\vpid]\semi
    \end{array}\\
    \kwassert\ \fun{SpecOK}\semi\ \ \emph{// see \eqref{eq:specok}}\\
    \vsc\ \coleq\ \fun{zeroIntArray}(\vNP)\semi\\
    \vrc\ \coleq\ \fun{zeroIntArray}(\vNP)\semi\\
    \vlevel\ \coleq\ 0\semi\\
    \vleft\ \coleq\ (\vpid+\vNP-1)\%\vNP\semi\\
    \vright\ \coleq\ (\vpid+1)\%\vNP\semi\\
    i\ \coleq\ 1\semi\\
    \vsum\ \coleq\ x\semi\\
    \kwwhile\ \lp\ i<\vNP\ \rp\ \lb\\
    \ \ \fun{AI}(\vlevel, \slevel(\vpid, \vleft, \vsc[\vleft]))\semi\\
    \ \ \kwassert\ \msginv(\vpid, \vleft, \vsc[\vleft], x)\semi\\
    \ \ \vsc[\vleft]\ \coleq\ \vsc[\vleft] + 1\semi\\
    \ \ \fun{AI}(\vlevel, \rlevel(\vright, \vpid, \vrc[\vright]))\semi\\
    \begin{array}[t]{@{}l@{}}
      \ \ \kwchoose\ x\ \st\ \msginv(\vright, \vpid, \vrc[\vright], x)\semi
    \end{array}\\
    \ \ \vrc[\vright]\ \coleq\ \vrc[\vright]+1\semi\\
    \ \ \vsum\ \coleq\ \vsum+x\semi\\
    \ \ i\ \coleq\ i+1\semi\\
    \rb\\
    \kwassert\ \vsum=\sum_{i=0}^{\vNP-1}A[i]\semi\\
    \begin{array}[t]{@{}l@{}}
      \kwassert\ \forall j\ .\ 0\leq j<\vNP\ra\\
      \ \ \vsc[j]=\nummsg(\vpid,j)\ \wedge\ \vrc[j]=\nummsg(j,\vpid)\semi
    \end{array}
  \end{program}
  \caption{Result of transforming \code{cycsum} to sequential.}
  \label{fig:transformb}
\end{figure}

\subsubsection{The message invariant}
Like the parallel program, the sequential program represents the
execution of a single arbitrary process.  It declares a parameter
$\vNP$ and a variable $\vpid$ and adds the assumption
\( 0\leq\vpid<\vNP\).  To transform the send and receive instructions,
we ask the user to provide information on the messages that are sent
in an execution of the program.  First, the user provides a function
$\nummsg(s,d)$, which specifies the number of messages sent from
process $s$ to process $d$ ($s,d\in\PID$).  For \code{cycsum}, this
function is shown in Fig.\  \ref{fig:transforma}(a).  If $d$ is the
left neighbor of $s$, $\vNP-1$ messages are sent; otherwise $0$
messages are sent.  This function must evaluate the same on every
process, and at any point during the execution.  To ensure this, the
definition is required to be \emph{universal}, i.e., it cannot use
variables, only literal constants, program parameters, and the formal
parameters ($s$ and $d$).
%but only (program or formal) parameters and literal constants.

Next, the user defines a universal predicate $\msginv(s,d,j,t)$ which
holds when $t$ is the $j^{\textit{th}}$ message sent from $s$ to $d$.
For the example, \msginv{} specifies the exact value of $t$, given
$s$, $d$, and $j$.  This choice allows us to prove the full functional
correctness of the program, i.e., the assertion on line 14.  Other
choices are possible: e.g., if one wanted to prove only
deadlock-freedom, one could use the predicate $\T$.  Or, one might
assume that all $x$'s lie within some interval $[a,b]$ and prove the
resulting sum is in $[a*\vNP,b*\vNP]$, in which case the message
invariant could be $a\leq t\leq b$.

To utilize the message invariant, we add two variables to the
sequential program: $\vsc$ (``send count'') and $\vrc$ (``receive
count'').  Each of these stores an array of integers of length $\vNP$,
initially all $0$s.  The former keeps track of the number of messages
sent from this process to other processes; the latter the number of
messages received.

Now a statement of the form $\kwsend\ x\ \kwto\ d\semi$ is replaced
with
\[
  \kwassert\ \msginv(\vpid, d, \vsc[d], x)\semi\ 
  \vsc[d]\ \coleq\ \vsc[d]+1\semi.
\]
This checks that the messages sent by the program satisfy the given
message invariant.  A statement of the form
$\kwrecv\ x\ \kwfrom\ s\semi$ is replaced with
\[
  \kwchoose\ x\ \st\ 
  \msginv(s, \vpid, \vrc[s], x)\semi\ 
  \vrc[s]\ \coleq\ \vrc[s]+1\semi.
\]
Here we use the fact that the $i^{\textit{th}}$ message sent from $s$
to $\vpid$ is the $i^{\textit{th}}$ message received in $\vpid$ from
$s$ --- the FIFO semantics of channels. These two substitutions
function as a kind of assume-guarantee reasoning
\cite{dingel:2003:assume-guarantee}.

Finally, the assertion at line 23 in Fig.\  \ref{fig:transformb} is
added to the sequential program.  This checks that the numbers of
messages sent and received from this process to other processes agrees
with the user's \nummsg{} specification.

\subsubsection{Levels}
The transformation thus described is incomplete.  To see why, consider
the parallel program obtained by transposing lines 9 and 10 in Fig.\ 
\ref{fig:cycsuma}.  Then every process attempts to receive before
sending, which clearly results in deadlock.  Nevertheless, the
transformed sequential program so far described is correct.  The
problem is the sequential program fails to check that communication
events from different processes can be interleaved into a coherent
global order.

The abstraction that is useful for reasoning about this is the
\emph{happens-before} relation \HB{} of an execution.  An execution
generates a set of communication events, each of which is either a
send or receive, and is identified by a source, destination, and
index.  \HB{} is a binary relation on this set, the transitive closure
of the union of two relations: (1) the \emph{intra-process order},
consisting of all $u\ra v$ where $u$ and $v$ are communication events
from the same process and $u$ occurs before $v$, and (2) the
\emph{matching relation}, consisting of all $u\ra v$ where $u$ is a
send and $v$ is the receive that receives the message sent by $u$.  It
is not hard to see that the \HB{} of an execution of an \mplite{}
program must be acyclic, as both constituent relations point forward
in time.  The \HB{} relation for a 3-process execution of
\code{cycsum} is shown in Fig.\  \ref{fig:transforma}(b).

Note that in \mplite, the matching relation is completely determined
by the intra-process order: for any $s,d$, the $i^{\textit{th}}$
occurrence of $\kwsend(s,d)$ in process $s$ matches the
$i^{\textit{th}}$ occurrence of $\kwrecv(s,d)$ in process $d$.  Hence
$\HB$ is determined by the execution of each process, independently of
how these events may be interleaved.  In other words, the generic
sequential program determines the $\HB$ relation.  For example, in the
case of the erroneous version of \code{cycsum} where the receives
happen first, the implied $\HB$ of a $2$-process execution would be
the transitive closure of
\begin{center}
  \includegraphics[scale=.7]{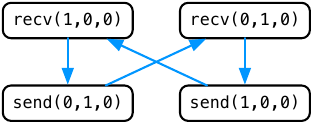}
\end{center}
But this relation has a cycle, and therefore cannot correspond to
an execution of the parallel program.
      
We therefore require the user to provide evidence that $\HB$ is
acyclic.  It is well-known that a directed graph is acyclic if and
only if each node $u$ can be assigned a positive integer \emph{level},
$l(u)$, in such a way that for any edge $u\ra v$, $l(u)<l(v)$.  We ask
the user to provide two universal functions that together specify the
level of every communication event.  Function $\slevel(s,d,j)$
specifies the level of the $j^{\textit{th}}$ send operation from $s$
to $d$, while $\rlevel(s,d,j)$ specifies the level of the
$j^{\textit{th}}$ receive operation in $d$ from $s$.  Generalizing
from the picture in Fig.\  \ref{fig:transforma}(b), appropriate level
definitions for \code{cycsum} are given in Fig.\ 
\ref{fig:transforma}(c).

The level functions are used in the transformation as follows.  First,
an assertion checks that both functions are positive and that each
send has a lower level than its matching receive:
\begin{multline}
  \fun{SpecOK} \equiv\ \forall s,d,j\ .
  0\leq s,d<\vNP\ \wedge\ 
  0\leq j<\nummsg(s,d)\ \ra \\
  0<\slevel(s,d,j)<\rlevel(s,d,j).
  \label{eq:specok}
\end{multline}
Next, a variable \vlevel, initially $0$, is added to the program, to
keep track of the current level, which is the level of the last
communication event from this process.  At a communication event, the
level is updated using the provided specifications---but only after
checking that the new level is strictly greater than the current one.
For brevity, we use a macro $\fun{AI}(x,y)$, which asserts $x<y$ and
assigns $y$ to $x$.  For a statement of the form
$\kwsend\ x\ \kwto\ d\semi$, we insert the code
\[
  \fun{AI}(\vlevel, \slevel(\vpid, d, \vsc[d]))\semi.
\]
For $\kwrecv\ x\ \kwfrom\ s\semi$, the following is inserted:
\[
  \fun{AI}(\vlevel, \rlevel(s, \vpid, \vrc[s]))\semi.
\]
Together with the check in $\fun{SpecOK}$, these assertions verify
that $\HB$ is acyclic.

This completes the description of the transformation.  In summary, the
user provides a parallel \mplite{} program together with four
universal function definitions, and the algorithm outputs a sequential
program.  The following is proved in \proofloc.

\begin{theorem}
  \label{thm:main}
  Let $P$ be an \mplite{} program and $S$ the sequential
  program resulting from transforming $P$.  If $S$ is correct
  then $P$ is correct.
\end{theorem}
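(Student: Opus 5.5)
Fix a value of $\vNP\geq 1$, parameter values, and initial variable values satisfying the assumption of $P$. The plan is to show that every execution of $P$ from such an initial state is correct and ends with all processes terminated normally. The basic tool is a projection: for each $p\in\PID$, take the local trace of process $p$ in an execution prefix of $P$ and map it to a trace of $S$ run with $\vpid=p$ and the same parameters and initial values. Each \kwsend{} and \kwrecv{} step is replaced by its transformed code. The ghost variables $\vsc$, $\vrc$, $\vlevel$ record the number of sends and receives per partner and the level of the last event. The \kwchoose{} in a transformed receive is resolved by taking the value actually received. For the projection to be a legal trace of $S$, that value must satisfy $\msginv$. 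I would prove this by induction on the length of the execution prefix, using the invariant that every prefix of $P$ projects, for every $p$, to a correct prefix of $S$. At the inductive step for a receive of the $k$-th message from $s$ at $d$, FIFO gives that the value is the $k$-th message sent from $s$ to $d$. Sender $s$'s projection is a correct prefix of $S$, so the corresponding \kwassert{} $\msginv(s,d,k,\cdot)$ held. Because $\msginv$ is universal, it is the same predicate at $d$, so the \kwchoose{} can pick that value. Assertions of $P$ are then assertions along correct $S$ prefixes, which gives assertion correctness of every prefix.

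The remaining task is to show that every maximal execution is finite and ends with all processes at the end. \textbf{Termination of infinite runs:} if some process takes infinitely many steps, its projection is an infinite execution of $S$, contradicting correctness of $S$, since $S$ must terminate normally. No process halts at a \kwchoose{} of $P$ whose projection mirrors $S$: $S$ is correct, so its \kwchoose{} statements always succeed along correct runs, and the choose statements of $P$ appear identically in $S$. Ordinary \kwchoose{} statements need the same care. Each process's projection is a prefix of some run of $S$, so nondeterminism in $S$ covers it.

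\textbf{Deadlock freedom} is the main step. Suppose a maximal finite execution ends with some set of processes blocked, each at a \kwrecv{} from some source, and the others terminated. Blocking is the only way to get stuck, since \kwsend{} never blocks. Let $d_0$ be blocked at its $k$-th receive from $s_0$, with no message available. I would extend $d_0$'s projection to run $S$ one further step. The $\fun{AI}$ check must then pass, so this pending receive has level $\rlevel(s_0,d_0,k)$, strictly greater than the levels of all earlier events of $d_0$. Separately, $d_0$'s $S$-run completes with $\vrc[s_0]=\nummsg(s_0,d_0)$, so $k<\nummsg(s_0,d_0)$. Sender $s_0$ has so far sent exactly $k$ messages to $d_0$. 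If $s_0$ were terminated, its final assertion would give $\nummsg(s_0,d_0)=k$, a contradiction. So $s_0$ is blocked, and its eventual $k$-th send to $d_0$ has level $\slevel(s_0,d_0,k)$. By $\fun{SpecOK}$ this is less than $\rlevel(s_0,d_0,k)$. By strict monotonicity of $\vlevel$ along $s_0$'s complete $S$-run, $s_0$'s pending receive has level at most $\slevel(s_0,d_0,k)$. Hence the pending level of $s_0$ is strictly less than the pending level of $d_0$. Iterating yields an infinite strictly decreasing sequence of positive integer levels among finitely many blocked processes, which is impossible. Alternatively, choose the blocked process of minimal pending level for an immediate contradiction. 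Note that $\fun{SpecOK}$ must be asserted in $S$ before use, and it is universal.

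The subtle point I expect to be the obstacle is justifying these "future" facts about a blocked process. The needed facts are its pending event level, its final $\vsc$ values, and that its later send has a level bounding its current one. The projected $S$-prefix must be extendable to a complete correct run of $S$, which requires the \kwchoose{} at the blocked receive to be satisfiable in $S$. I would handle this by showing, within $S$, that each \kwchoose{} along a correct run either succeeds or $S$ halts. Halting would be incorrect, so every prefix extends to a complete run, and any extension serves. The levels and counts used above depend only on the prefix and the next event, so the choice of extension does not matter.
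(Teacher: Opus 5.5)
Your proposal is correct and follows essentially the same route as the paper's proof. Both arguments simulate $P$ by $\vNP$ independent copies of $S$, resolving each receive's $\kwchoose$ with the value actually received. You justify that choice via the sender's passed $\msginv$ assertion, which the paper packages as a channel invariant. Termination comes from the fact that an infinite run would give some copy of $S$ an infinite run. Deadlock-freedom comes from taking the blocked process with minimal pending level and deriving $\rlevel(q,s,l)<\slevel(s,p,k)<\rlevel(s,p,k)$. Your explicit argument that a blocked process's projection extends to a complete correct run of $S$ spells out a step the paper uses without comment.
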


\subsection{Variations}
\label{sec:var}

\subsubsection{Synchronous semantics}
\label{sec:synch}
The material above uses an asynchronous semantics for $\mplite$, but
we could just as well choose the synchronous semantics.  In this
interpretation, there are no buffers.  A send blocks until control in
the receiving process is at a matching receive, at which point the two
operations may execute together in a single atomic step.  Everything
goes through as before, with the following modifications:
\begin{itemize}
\item Each communication event comprises a matching send and receive
  pair, so involves two processes.  The $\HB$ relation of an execution
  is the transitive closure of the relation consisting of all
  $u\ra v$, where $u$ and $v$ involve a common process $p$, and the
  $p$-operation in $u$ precedes the $p$-operation in $v$.
\item The user only needs to provide a single level function, say
  $\slevel(s,d,j)$, specifying the level of the event comprising the
  $j^{\textit{th}}$ send from $s$ to $d$ and the $j^{\textit{th}}$
  receive in $d$ from $s$.
\item In equation \eqref{eq:specok}, remove the ``$<\rlevel(s,d,j)$''.
\item In the transformation of a \kwrecv{} statement, use \slevel{} in
  place of \rlevel{}.
\end{itemize}

Note that \code{cycsum} deadlocks under the synchronous semantics,
as all processes first send, then receive.  This situation is so
common that MPI provides a procedure (\mpisendrecv) that lets
a send and a receive operation execute concurrently.   We can
do the same in \mplite{} by adding a statement of the form
\[
  \kwsend\ x\ \kwto\ d\ \bbar\ \kwrecv\ y\ \kwfrom\ s \semi.
\]
The semantics are the same as executing the two operations
sequentially, only now the two can complete in either order.
This statement is transformed to
\begin{center}
  \begin{program}
    \kwassert\ \slevel(\vpid, d, \vsc[d]) > \vlevel\semi\\
    \kwassert\ \slevel(s, \vpid, \vrc[s]) > \vlevel\semi\\
    \vlevel\ \coleq\ \fun{MAX}(\slevel(\vpid, d, \vsc[d]),
    \rlevel(s, \vpid, \vrc[s]))\semi\\
    \kwassert\ \msginv(\vpid, d, \vsc[d], x)\semi\\
    \kwchoose\ y\ \st\ \msginv(s, \vpid, \vrc[s], y)\semi\\
    \vsc[d]\ \coleq\ \vsc[d] + 1\semi\ 
    \vrc[s]\ \coleq\ \vrc[s] + 1\semi.
  \end{program}
\end{center}
Using this combined statement in place of the send and receive
statements in \code{cycsum} yields a correct synchronous program
which transforms to a correct sequential program using
level function $\slevel(s,d,j)=j+1$.

\subsubsection{MPI semantics}
MPI does not specify synchronous or asynchronous semantics for the
standard-mode blocking point-to-point operations.  Rather, it says
that any send \emph{may} be buffered \emph{or may} be forced to wait
until the matching receive occurs and the communication takes place
synchronously.  This leaves maximum freedom to the implementation.
While this might seem to complicate verification, in fact it adds
nothing over synchronous semantics.  Specifically, an \mplite{}
program is correct under synchronous semantics iff it is correct under
MPI semantics.  See \cite{siegel-avrunin:2005:ppopp} for proofs.

\section{Verifying C/MPI Programs with \vmfc}
\label{sec:impl}

We have developed a prototype tool for deductive verification of C/MPI
programs using Frama-C.  The Verified MPI extension to Frama-C
(\vmfc{}) is based on the ideas of Section \ref{sec:transform}, but
introduces new concepts to deal with the many ``accidental
complexities'' in MPI and C.  It comprises a Perl program that parses
and transforms the C source, and an implementation of \code{mpi.h}.
We illustrate \vmfc{} on \code{cycsum.c}, a C/MPI version of the
\mplite{} program \code{cycsum} of Fig.\  \ref{fig:cycsuma}.  The
primary procedure in this program is named \code{sum}, which consumes
a \code{double} $x$ and returns the sum of the $x$ values over all
processes.  It is implemented exactly as in Fig.\  \ref{fig:cycsuma}
but uses $\mpisendrecv$ (see Sec.\ \ref{sec:var}).

\subsection{Background on MPI and Frama-C}

\subsubsection{MPI}
The MPI Standard \cite{mpi-forum:2015:mpi31} specifies a
message-passing library with bindings in C or Fortran.  We assume
basic familiarity with MPI.  In this paper, we use the C bindings, and
for now limit attention to the standard-mode blocking point-to-point
procedures

\begin{small}
\begin{verbatim}
MPI_Send(buf, count, datatype, dest, tag, comm);
MPI_Recv(buf, count, datatype, source, tag, comm, status);
MPI_Sendrecv(sendbuf, sendcount, sendtype, dest, sendtag,
  recvbuf, recvcount, recvtype, source, recvtag, comm, status);
\end{verbatim}
\end{small}

\noindent{}(Later we will see how to incorporate blocking collective
procedures.)  We use only the predefined communicator \mpicommworld,
comprising all processes, and consider the rank in \mpicommworld{} to
be the PID.  We prohibit \mpianysource, but allow \mpianytag.

Standard procedures for initializing and finalizing
MPI, and for obtaining the rank of a process and the number of
processes, are also in the supported subset.

\subsubsection{Frama-C}

Frama-C \cite{cuoq-etal:2012:frama-c, kosmatov-etal:2024:frama-c} is
an analysis framework for C programs with a plug-in-based design.  The
\Wp{} plug-in \cite{baudin-etal:2025:wp-manual,
  blanchard-etal:2024:wp} is used for deductive verification.  It
accepts a C program with annotations, specially formatted comments of
the form \code{/*@} \ldots \code{*/} or \code{//@ \ldots}.
Annotations are in the ACSL specification language
\cite{baudin-etal:2025:acsl} and are used to write procedure
contracts, including preconditions, postconditions, and frame
conditions.  They can also encode loop invariants and variants,
assertions, and assumptions (\code{admit} \emph{expr}\code{;}).
Annotations can also contain \emph{ghost code}, executable code which
appears only during verification.  Ghost code can also be used to add
ghost parameters to a procedure, and ghost arguments in procedure
calls.  Logical functions and predicates can also be defined
(including inductively) in annotations, and used in other parts of the
specification.

From the annotated code, Frama-C generates verification conditions
(VCs) whose validity implies that procedures satisfy their contracts
and all assertions hold.  The VCs can be discharged with a number of
automated and interactive theorem provers.  In the case study reported
here, the automated provers Alt-Ergo \cite{ocamlpro:2025:alt-ergo}, Z3
\cite{demoura:2008:z3}, and CVC5 \cite{barbosa-etal:2022:cvc5}
discharge all VCs.

\subsection{\vmfc{} Procedure Contracts}

\vmfc{} supports new classes of ACSL annotations, all of which begin with
the keyword \code{mpi}.  It includes a bespoke version of the header
file \code{mpi.h}, which defines a number of primitives available in
annotations.  These are all prefixed by \code{VM}, and include
\code{VM\U{}NP} (the number of processes) and \code{VM\U{}pid} (the
process ID).  \vmfc{} takes a C/MPI program with these annotations,
converts it to an ordinary sequential ACSL-annotated C program, and
invokes Frama-C/\Wp{}.

\begin{figure}
  \begin{small}
\begin{verbatim}
/*@ mpi collective;
    mpi universal A[0 .. VM_NP-1];
    requires Init && VM_state==VM_Active && rank==VM_pid &&
       nprocs==VM_NP;
    requires Array(A, VM_NP) && x==A[VM_pid];
    ensures E1: \result == Cyclic_Sum(A, VM_NP, 0, VM_NP);
    ensures E2: VM_UnchangedGlobal{Pre,Here};
    assigns VM_GlobalVars;   */
T sum(T x) /*@ ghost (\ghost const T * A) */  ...
\end{verbatim}
  \end{small}
  \caption{Contract for procedure \code{sum} in \code{cycsum.c}.}
  \label{fig:contract}
\end{figure}

The contract for \code{sum} is shown in Fig.\  \ref{fig:contract}.  It
begins with the clause \code{mpi collective;}, indicating that
$f$ is intended to be called collectively (by all processes) and that
$f$ is \emph{communication-closed}, i.e., messages sent within $f$
should be received in $f$.  Such a procedure will be transformed
following the approach described in Section \ref{sec:transform}.

A collective procedure contract may also contain a clause of the form
\[
  \code{mpi universal}\ \textit{tset}\ \gtxt{(} \comma\ \textit{tset}
  \gtxt{)*} \semi
\]
A \emph{tset} is an ACSL construct representing a set of memory
locations.  ACSL provides a rich language for tsets, but for our
purposes, each is either a variable or an expression of the form
\code{a[}\textit{lo}\code{..}\textit{hi}\code{]}, where \code{a} is a
variable and \textit{lo} and \textit{hi} are integer expressions.  The
latter represents a section of an array.  These variables may be
formal parameters of $f$ or global variables.  The clause encodes the
requirement that when $f$ is called, the listed memory locations
are universal, i.e., have the same value on every process.  It also
encodes the guarantee that $f$ will not assign to these locations.
Hence, these variables (and array entries) play the role of
``parameters'' in the sense of Section \ref{sec:transform}: they can
be used in a message specification.

In the case of \code{sum}, a ghost parameter \code{A} is added, which
plays the role of array $A$ in Fig.\  \ref{fig:cycsuma}.  The contract
requires that \code{A} is a valid array of doubles of length \vmnp.
The elements of this array are required to be universal.  The contract
also requires that MPI has been initialized (by a call to \mpiinit),
but not yet finalized (\mpifinalize).  Finally, the contract uses a
logic function \verb!Cyclic_Sum!, defined inductively to satisfy
\begin{equation}
  \code{Cyclic\U{}Sum}(p,n,s,m) = \sum_{i=0}^{m-1}p[(s+i)\%n].
\end{equation}

\subsection{MPI Regions}

Within the body of $f$, a section of code that implements a
self-contained collective routine is called an \emph{MPI region}.  A
region corresponds to a ``program'' in the sense of Section
\ref{sec:transform}.  The boundaries of a region are bracketed with a
\emph{begin region} annotation and the \emph{end region} annotation
(\code{//@ mpi end region;}).  The \emph{begin region} annotation used
in procedure \code{sum} in \code{cycsum.c} is shown in Fig.\ 
\ref{fig:beginregion}.

\begin{figure}
  \begin{small}
\begin{verbatim}
/*@ mpi begin region 1:
    nummsg(src,dest) =
      (dest==(src+VM_NP-1)%VM_NP ? VM_NP-1 : 0);
    mcount(src,dest,idx) = 1;
    mdtype(src,dest,idx) = MPI_DOUBLE;
    msgtag(src,dest,idx) = TAG;
    msginv(src,dest,idx,buf,count,dt) =
      *(buf)==A[(src+idx)%VM_NP];
    slevel(src,dest,idx) = idx+1;  */
\end{verbatim}
  \end{small}
  \caption{The \emph{begin region} annotation for the MPI region of
    \code{sum} in \code{cycsum.c}.}
  \label{fig:beginregion}
\end{figure}

The user assigns an ID number ($1$) to the region.  A procedure $f$
can have multiple regions, each with an ID unique within $f$.  This is
followed by several universal function definitions, including
$\anummsg$, $\amsginv$, and $\aslevel$.  These follow the descriptions
in Section \ref{sec:transform}, except that $\amsginv$ consumes a
pointer, count, and datatype in place of a single argument, as these
three arguments together are needed to specify the message data.
Functions \code{mcount}, \code{mdtype}, and \code{msgtag} specify the
expected count, datatype, and tag of each message; as with the message
data, a send will assert this metadata is as expected, and each
receive will rely on these being true of an incoming message.

A transformed receive asserts that the count $c$ of the incoming
message is at most the receive's count argument; that the datatypes
match exactly; and that the tags are either equal or the tag argument
is \mpianytag.  It then ``havocs'' the first $c$ elements of the
receive buffer and assumes (using \code{admit}) that the message
invariant holds over the havoced region.  The current level, stored in
variable \vmlevel, is updated as explained in Section
\ref{sec:transform}.

\vmfc{} requires that all MPI communication takes place within some
region.  There are two kinds of regions: the kind described above,
explicitly demarcated with annotations, is an ``internal'' region.
But $f$ may also call other (user-defined or MPI-defined) collective
procedures.  \vmfc{} considers each call to a collective procedure to
be an ``external'' region unto itself.  In fact, there is little
difference between the two kinds of regions: both are
communication-closed routines invoked by all processes, specified by
either a message specification or a procedure contract.  Regions
cannot be nested: so, for example, it would be erroneous for a
collective call to occur within an internal region.

\vmfc{} enforces a key requirement regarding regions in a procedure $f$: on
any execution of $f$, all processes must execute the same region
sequence.  For example, if one process first executes internal region
1, then calls a collective procedure $g$, and finally executes
internal region 2, then all processes must do so, in the same order.
This is a natural generalization of the requirement in the MPI
Standard that all processes invoke collective procedures in the same
order.  Without this restriction, it would be nearly impossible to
avoid deadlocks, or messages ``leaking'' from one region into another.

The constraint that all processes execute the same region sequence can
be viewed as a ``universal'' constraint on control-flow, analogous to
the universal constraint on data required by many procedures.  To
enforce it, we ask the user to define universal functions specifying
the sequence.  An annotation of the form

\begin{small}
  \[
    \begin{array}{@{}l@{}}
      \code{/*@ mpi begin regions: nregions = }\textit{univ-expr}
      \code{;}\\
      \code{\ \ \ \ region(i) = }\textit{univ-expr}\code{;}\ 
      \ldots\ \code{*/}
    \end{array}
  \]
\end{small}

\noindent{}must occur once in $f$, before any MPI regions, and

\begin{small}
\begin{verbatim}
     //@ mpi end regions;
\end{verbatim}
\end{small}
occurs after all regions.  Function $\nregions$ specifies the number
of regions, $\region(i)$ specifies the ID of the $i$-th region.
Within the definition of $\region(i)$, the name of a collective
procedure $g$ is used as the ID of an external region, while any
positive integer expression is interpreted as the ID of an internal
region.  Procedure \code{sum} of \code{cycsum.c} is particularly
simple: control passes through exactly one region, internal region
$1$, so the begin regions annotation is simply

\begin{small}
\begin{verbatim}
/*@ mpi begin regions: nregions = 1; region(i) = 1; */
\end{verbatim}
\end{small}

We have seen that a collective procedure $g$ may require that certain
variables are universal when $g$ is called.  This must be enforced at
the call site.  Again, we require the user to provide the evidence of
universality.  For every collective $g$ called by $f$, and every
variable \code{x} declared \code{universal} in the contract of $g$,
the user must provide a definition of the form
\(
  \code{g\#x(i) = } \textit{univ-expr} \code{;}
\).
This specifies the expected value of \code{x} when $\region(i)=g$.
For an array variable $A$, the definition has the form
\[
  \code{g\#A(i,j) = } \textit{univ-expr} \code{;}
\]
defining element $j$ of $A$ when $\region(i)=g$.
These definitions go in the \emph{begin regions} annotation of $f$.
The transformation adds state that keeps track of the current region,
checks that all processes complete the expected region sequence, and
asserts the arguments or globals expected to be universal at a call
match the specified values.

Program \code{cycsum.c} contains a procedure \code{test1} that
performs a simple test of \code{sum}.  The contract for \code{test1}
assumes $A$ is a universal array with $A[j]=j$ for $0\leq j<\vmnp$.
This procedure contains a single region, which is a call to
\code{sum}.  The \emph{begin regions} annotation is:

\begin{small}
\begin{verbatim}
/*@ mpi begin regions: nregions = 1;
    region(i) = sum; sum#A(i,j) = j; */
\end{verbatim}
\end{small}

% We cannot show the complete \code{cycsum.c} for lack of space, but
% here summarize a few of the remaining salient aspects.
The annotations also include a mini-theory of cyclic sums: in addition
to the inductive definition, there are three lemmas, one ``lemma
macro'' proving that a cyclic sum does not change if its arguments
don't, and a ``lemma function'' \cite[\S4.4.7]{blanchard-etal:2024:wp}
that proves, for $0\leq s<n$,
\begin{equation}
  \code{Cyclic\U{}Sum}(a, n, s, n) = \code{Cyclic\U{}Sum}(a, n, 0, n).
  \label{eq:cyclicLemma}
\end{equation}
The program is proved using option \code{-wp-model real} to interpret
floating-point operations as real arithmetic.  Neither
\eqref{eq:cyclicLemma} nor the contract hold for floating-point
arithmetic, and in our view a real arithmetic specification is
appropriate for code that performs a reduction.

The code contains two intermediate assertions in addition to a call to
the lemma function and to the lemma macro.  The \code{for} loop has
$7$ loop invariants; some of these refer to variables inserted by the
transformation, such as \code{VM\U{}rc} (the receive count array) and
\code{VM\U{}level}.

\subsection{Evaluation}

We have used \vmfc{} to verify $5$ C/MPI programs.\footnote{All
  artifacts for this study are available at
  \cite{siegel:2026:param-ver-artifact}.}  Verification includes
functional correctness and freedom from deadlock, MPI errors such as
buffer overflows or inconsistent collective call sequences, and
standard C run-time errors such as out-of-bound indexes or integer
overflows.  No bounds are placed on the program parameters other than
those implied by C's integer types.  All programs also compile without
warnings or errors, and run a simple test when executed with any
number of processes.  The programs are:
\begin{enumerate}
\item \code{cycsum.c}: cyclic sum reduction described in Section
  \ref{sec:impl}.
\item \code{allsum.c}: like \code{cycsum}, but uses
  a star network instead of a ring.
\item \code{diffuse1d.c}: $1$-dimensional diffusion (or ``heat
  equation'') simulation with fixed boundary condition.  Ghost cell
  exchanges take place in two directions.  Parameterized by the number
  of discrete spatial points (\code{NX}) and the number of time steps
  (\code{NSTEP}), in addition to the number of processes.
\item \code{bcast.c}: a simple implementation of MPI's broadcast
  collective \mpibcast.
\item \code{gather.c}: implementation of \mpigather.
\end{enumerate}
We used the real model of arithmetic for 1--3, and floating-point for
4--5.

\begin{figure}
  \centering
  \setlength{\tabcolsep}{6pt}
  \begin{tabular}{lrrrrr}
    \multicolumn{1}{c}{Program}
    & \multicolumn{1}{c}{ACSL}
    & \multicolumn{1}{c}{Code}
    & \multicolumn{1}{c}{VCs}
    & \multicolumn{1}{c}{TO}
    & \multicolumn{1}{c}{Time} \\
    \hline
    \code{cycsum.c}    & 132 & 38 & 166 & 60 & 60 \\ % timeout: 60.  VCs:65+67=132
    \code{allsum.c}    &  79 & 51 & 192 & 15 & 18 \\ % timeout: 15.
    \code{diffuse1d.c} & 125 & 55 & 238 & 15 & 56 \\ % timeout: 15.
    \code{bcast.c}     &  78 & 46 & 144 & 15 &  9 \\ % timeout: 15.
    \code{gather.c}    & 140 & 53 & 220 & 30 & 43 \\ % timeout: 30
  \end{tabular}
  \caption{Statistics for proved C/MPI programs: number of lines
    containing ACSL annotations, number of lines containing C code,
    number of verification conditions generated, timeout (seconds)
    specified for each prover call, total verification time.}
  \label{fig:stats}
\end{figure}

Fig.\ \ref{fig:stats} shows the number of lines containing
(VM-extended) ACSL annotation and the number of lines containing C
code.  The annotation burden is high, $1.5$--$3.5\times$ the amount of
C code, but is not unusual for this style of C code verification.  For
\code{cycsum.c}, about half of the ACSL is for the mini-theory of
cyclic sums, in a separate file which could conceivably be reused in
other settings.

The remaining columns are the number of VCs generated, the timeout for
each prover call, and the total verification time, run on a 2025 M4
MacBook Air.  All verification times are under one minute.  Prover
calls dominate these times.  Transformation, compile, and test
execution times are all negligible.

In \code{diffuse1d.c}, an inductively defined logic function gives the
temperature at any time and point.  This is used throughout the
specification.  A procedure \code{exchange} performs the ghost-cell
exchange with two \mpisendrecv{} calls.  A procedure \code{run} loops
over time and at each iteration calls \code{exchange} and then a
procedure that performs the local update.  Its \emph{begin regions}
annotation contains:

\begin{small}
\begin{verbatim}
nregions = NSTEP; region(i) = exchange; exchange#time(i) = i;
\end{verbatim}
\end{small}

Program \code{bcast.c} defines a procedure \code{Bcast} with (almost)
the same signature as MPI's \mpibcast.  Like many of the MPI
collectives, the MPI Standard requires that several of the arguments
be the same on all processes.  This is indicated in our contract,
which reads in part:

\begin{small}
\begin{verbatim}
mpi universal count, datatype, root, Data[0..count-1];
requires VM_pid == root ==> Equal(buf, Data, count);
ensures Equal(buf, Data, count);
\end{verbatim}
\end{small}

Here, \code{Data} is a universal ghost parameter, and \code{Equal} is
a logic predicate stating that two arrays are pointwise equal.  The
contract says that if the buffer on the root process contains
\code{Data} in the pre-state, then in the post-state the buffer on any
process contains \code{Data}.  This captures the semantics in a
concise natural way.

The one change in signature is that in MPI, the \code{buf} argument
has type \code{void*}.  \vmfc{} requires all buffers to have a type of
the form pointer-to-$T$ for a basic type $T$ (e.g., \code{double},
\code{int}).  This comes down to limitations in the \Wp{} plug-in for
reasoning about pointers.  The program therefore includes concrete
definitions for $T$ and the corresponding MPI datatype.  These can be
easily changed to verify using different element types, but at this time
we do not know a way to verify all types at once.  This limitation
applies to other MPI collective procedures.

\section{Related Work}
\label{sec:related}

The subset of MPI used in this paper was analyzed in
\cite{siegel-avrunin:2005:ppopp}, where a number of basic facts
concerning determinism are proved.  For example, for a given input,
freedom from assertion violations can be verified by examining a
single execution (Theorem 6), and deadlock-freedom can be verified by
showing a single synchronous execution is deadlock-free (Corollary 7).

Finite-state verification techniques, combining model checking and
symbolic execution, have been used in several tools to verify MPI
programs within a process bound.  Notable examples are MPI-Spin
\cite{siegel:2007:vmcai}, CIVL \cite{siegel-etal:2015:civl_sc}, Hermes
\cite{khanna-etal:2018:dynamic-symbolic-mpi}, and MPI-SV
\cite{yu-etal:2020:mpi-sv}.  Hermes has scaled to 32 processes on some
deterministic programs, and up to $8$ on ones using \mpianysource.
The technique in this paper proves correctness without any process
bound, but does not handle \mpianysource.

ParTypes \cite{vasconcelos-etal:2022:partypes, lopez-etal:2015:mpi}
was one of the first tools for parameterized verification of
deterministic MPI programs.  It requires the user to write a protocol
describing communication operations in an execution from a global
point of view.  Type-checking techniques are used to prove each
process conforms to the protocol. The protocols do not deal with the
content of messages, so the approach is not used to verify functional
correctness, but can prove deadlock-freedom.  The communication
pattern used in the program must be expressible in the protocol
language, which has built-in common patterns such as \emph{gather},
\emph{scatter}, \emph{reduce}, and \emph{broadcast}.  The approach of
this paper does not require a protocol, but instead simpler
abstractions: a global specification of the ``level'' of each
send/receive operation in the happens-before relation of a collective
region, and the global sequence of regions.  In addition, our approach
enables reasoning about the data \emph{via} a message invariant.

A procedure contract system extending ACSL for MPI was explored in
\cite{luo-siegel:2024:contracts}.  This system includes a term of the
form $\code{\bs{}on(}e\code{,}i\code{)}$, which returns the result of
evaluating expression $e$ on process $i$.  While very general, there
is no obvious way to translate such an expression into a form that
could be used by Frama-C.  Instead, \cite{luo-siegel:2024:contracts}
uses the CIVL model checker to verify procedures conform to their
contracts for a bounded process count.  In contrast, we use a more
restricted construct, \code{universal}, which declares a variable has
the same value on every process, and requires that only universal
variables can be used in certain expressions.  Everything in our
extension can be easily translated to an ordinary ACSL-annotated
sequential C program, which can be verified by Frama-C in full
generality.

Using message invariants in assume-guarantee reasoning
for channel communication has been explored before; cf.\
\cite{leino-mueller-smans:2010:deadlock}.

As discussed in Sec.\ \ref{sec:intro}, this paper builds on work on
DafnyMPI \cite{fedchin-etal:2026:dafnympi}, but improves in several
ways.  First, DafnyMPI requires the parallel algorithm to be
implemented in the language of a verification tool while \vmfc{} works
off of existing C/MPI code.

Second, DafnyMPI requires the tags to reflect a global message order,
essentially requiring the user to provide a topological sort on the
\emph{happens-before} relation.  The MPI version of \code{cycsum}, for
example, might use the tag $i*\vNP+\vpid$.  \vmfc{} has no
restrictions on the tags and takes a ``local'' view: for each $s,d$,
messages are identified by the order they are sent from process $s$ to
process $d$. \vmfc{} requires only the minimal information needed to
show \emph{happens-before} is acyclic: the \emph{level} of the $j$-th
message from $s$ to $d$.  For \code{cycsum}, this expression is $j+1$
(Sec.\ \ref{sec:var}).  This not only reduces the cognitive load on
the user, but reduces the complexity of the expressions that will be
sent to the theorem prover, in this case from quadratic to linear.

While both approaches are procedure-modular, \vmfc{} permits even more
fine-grained modularity.  A procedure body may be decomposed into
separate communication-closed regions, each with its own
specification.  For \code{cycsum}, instead of specifying one region
around the entire procedure, one could specify a region around the
body of the \code{while} loop.  Within each region, a process performs
exactly one send-receive, and the level expression becomes
trivial: the constant function $1$.  The user need only specify the
order that regions are encountered, which must be the same for all
processes.

The region approach also incorporates collective calls in a natural
way: each region may be either internal (a block of code using
point-to-point operations) or external (a call to a collective
function).  In contrast, DafnyMPI numbers the collective calls
separately and requires the user to specify where they are inserted in
the point-to-point message sequence.  Again, the \vmfc{} approach
results in simpler expressions.

On the other hand, the deterministic subset of MPI used in
\cite{fedchin-etal:2026:dafnympi} is more general than that used here:
it consists of the standard-mode \emph{nonblocking} point-to-point
operations \mpiisend, \mpiirecv, and \mpiwait, and some other
operations that can be derived from these, including \mpisend,
\mpirecv, and various collectives.  We are currently working on
extending \vmfc{} to handle nonblocking operations.

Finally, a significant source of complexity in the semantic arguments
of \cite{fedchin-etal:2026:dafnympi} is the ``nondeterminism'' arising
from standard-mode send operations.  For each send, the MPI
implementation may either buffer the message, allowing the send to
complete even before a matching receive has been posted, or force the
sender to wait until the matching receive is called.  This complexity
is unnecessary.  As we pointed out above, to verify freedom from
deadlock and assertion violations, it suffices to consider only the
synchronous semantics; this holds for the nonblocking subset as well
\cite{siegel-avrunin:2007:nbsynch}.  This observation simplifies our
soundness proof considerably, and would also simplify the arguments in
\cite{fedchin-etal:2026:dafnympi}.

\section{Conclusion}
\label{sec:conclude}

We have described an approach for full parameterized verification of
C/MPI programs that use a deterministic subset of MPI.  The user
provides certain functions which specify the sequence of collective
``regions'' through which all processes will pass.  For internal
regions, implemented using point-to-point operations, the user provides
functions specifying the number of messages from each sender to each
receiver and the count, tag, and datatype of each message, and a
constraint on the message contents.  These functions must be
\emph{universal}, expressed using symbols which are known to be
constant and the same on all processes.

We have realized this approach in a prototype tool based on
Frama-C/\Wp{}. The tool takes a C/MPI program annotated in an MPI
extension of ACSL, and produces an ordinary ACSL-annotated sequential
C program that can be verified with Frama-C.  The prototype has some
limitations: buffers must be properly typed (not \code{void*}); MPI
derived datatypes are not yet supported; and for now, only
1-dimensional arrays are supported.  The transformation tool is very
simple and written in Perl; a more robust version will be a proper
Frama-C plug-in in OCaml.

The general approach opens up many avenues for future research.  The
transformed program is a sequential program, and is therefore amenable
to a variety of verification tools and approaches, not just Frama-C
or deductive verification.  The verifier must support choice,
assertions, and assumptions, but these are now standard features.
Model checkers, symbolic execution tools, and static analysis tools
are all reasonable candidates.  Even testing could be improved---it is
easier to test a sequential program than an MPI program.

The big open question is how to achieve parameterized verification of
nondeterministic MPI programs.  This seems to be a very hard problem
that will require completely new ideas.

\subsubsection*{Acknowledgments}

This material is based upon work supported by the U.S.\ National
Science Foundation under Award Number CCF-2446130, and by the U.S.\
Department of Energy, Office of Science, Advanced Scientific Computing
Research (ASCR) Program, under Award Number DE-SC0025953.

This report was prepared as an account of work sponsored by an agency
of the United States Government. Neither the United States Government
nor any agency thereof, nor any of their employees, makes any
warranty, express or implied, or assumes any legal liability or 
responsibility for the accuracy, completeness, or usefulness of any
information, apparatus, product, or process disclosed, or represents
that its use would not infringe privately owned rights. Reference
herein to any specific commercial product, process, or service by
trade name, trademark, manufacturer, or otherwise does not necessarily
constitute or imply its endorsement, recommendation, or favoring by
the United States Government or any agency thereof. The views and
opinions of authors expressed herein do not necessarily state or
reflect those of the United States Government or any agency thereof.

\bibliographystyle{ACM-Reference-Format}
\bibliography{mpinv}

\appendix

\newcounter{artifactCounter}
\newcounter{artpartCounter}
\newcounter{artevalCounter}

\newcommand{\newartifact}{%
    \stepcounter{artifactCounter}%
    \subsection{Computational Artifact $A_{\theartifactCounter}$}%
}

\newcommand{\artrel}{%
    \subsection*{Relation To Contributions}%
}

\newcommand{\artexp}{%
    \subsection*{Expected Results}%
}

\newcommand{\arttime}{%
    \subsection*{Expected Reproduction Time (in Minutes)}%
}

\newcommand{\artin}{%
    \subsection*{Artifact Setup (incl. Inputs)}
}

\newcommand{\artinpart}[1]{%
\subsubsection*{#1}}

\newcommand{\artcomp}{%
    \subsection*{Artifact Execution}
}

\newcommand{\artout}{%
    \subsection*{Artifact Analysis (incl. Outputs)}
}

\newcommand{\apppart}[1]{%
\vspace{2ex}
\begin{center}\begin{large}\textbf{#1}\end{large}\end{center}\par
%\vspace{2mm} % Adjust vertical space as needed
}

\newcommand{\appendixAE}{%
\setcounter{subsection}{0}%
\apppart{Artifact Evaluation (AE)}%
}

\newcommand{\appendixAD}{%
\apppart{Artifact Description (AD)}%
}

\newcommand{\arteval}[1]{%
    \setcounter{artifactCounter}{#1}
    \subsection{Computational Artifact $A_{#1}$}%
}

\newcommand{\artexpl}[1]
{\tagged{explanation}{{\color{colexpl}#1}}}

\newcommand{\artsampl}[1]
{\tagged{example}{{\color{colsample}#1}}}

\section{Artifact Description}

\subsection{Overview of Contributions and Artifacts}

\subsubsection{Paper's Main Contributions}

\begin{description}
\item[$C_1$] A general method to formally verify message-passing
  programs for any number of processes, by transforming to a
  parameterized sequential program using user-provided information
  such as message invariants.  The process is described in the paper,
  where a formal proof of its soundness is also given, in an
  ideadlized model.
\item[$C_2$] An implementation of this method for C/MPI programs using
  the Frama-C analysis and verification platform, including a
  transformer, VMFC, written by us.
\item[$C_3$] A small case study applying VMFC to 5 C/MPI
  programs, showing that they can be proved correct for any number of
  processes and input sizes.  The annotation burden (i.e., the amount
  of annotation comments that must be inserted into the code to enable
  the proof) is large but not atypical for this type of verification
  of C programs.  The verification time is also reasonable.
\end{description}

\subsubsection{Computational Artifacts}

The artifact is on Zenodo \cite{siegel:2026:param-ver-artifact}:

\begin{description}
\item[$A_1$] \url{https://doi.org/10.5281/zenodo.21433793}
\end{description}

\begin{center}
\begin{tabular}{rll}
\toprule
  Artifact ID  &  Contributions &  Related \\
               &  Supported     &  Paper Elements \\
  \midrule
  $A_1$   &  $C_2$ & Section \ref{sec:impl} \\
          &  $C_3$ & Figure \ref{fig:stats}  \\
\bottomrule
\end{tabular}
\end{center}

\subsection{Artifact Identification}

\newartifact

\artrel

This artifact provides the implementation of the prototype
verification tool VMFC (Contribution $C_2$) together with the results
of applying it to 5 C/MPI programs (Contribution $C_3$).

VMFC comprises (1) the Perl script \texttt{vmfc.pl} in directory
\texttt{scripts}, and (2) the header file \texttt{mpi.h} in directory
\texttt{include}.  Together they are used to transform the C/MPI
program with appropriate annotations into a sequential C program with
annotations that enable verification by Frama-C.

Each of the C/MPI programs is in its own directory.  The directories
are:
\begin{enumerate}
\item \texttt{cycsum}
\item \texttt{allsum}
\item \texttt{diffuse1d}
\item \texttt{bcast}
\item \texttt{gather}.
\end{enumerate}

Each directory is structured in the same way.  We describe the case of
\texttt{cycsum}, the others are similar.

Within directory \texttt{cycsum} are the following files:
\begin{enumerate}
\item \texttt{cycsum.c}: this is a complete C/MPI program with
  VMFC annotations added by us.  It is one of the inputs to
  the experiment.
\item \texttt{Cyclic.h}: this is a header file written by us which
  contains a small ACSL theory included by \texttt{cycsum.c}.  It is
  another input to the experiment.  (Not every program has such an
  extra file.)
\item \texttt{cycsum-vm.c}: this is the file resulting from the
  transformation performed by VMFC.  It is a parameterized sequential
  C program with ACSL annotations that can be verified by Frama-C.
  This file is one of the outputs produced by our tool.  When
  reproducing the experiment, the exact same file should be produced.
\item \texttt{out.txt}: this is the output from running Frama-C on
  \texttt{cycsum-vm.c}.  It shows the exact command that invoked
  Frama-C, a summary of the generated verification conditions showing
  they are all discharged, and timing information.
\item \texttt{Makefile}: This can be used to execute the entire
  process: run VMFC on the C/MPI program, then run Frama-C on the
  result.  It can also compile and execute the original C/MPI program
  (which runs a small test).  Simply typing \texttt{make} should do
  all of these things.
\item \texttt{time.txt}: the total time for running \texttt{make}.  
\end{enumerate}

\artexp

The result of each of the 5 experiments should be that all
verification conditions are discharged, so that the program is proved
correct.  Run on an M4 MacBook Air laptop, the entire time should
not exceed 1 minute per program.  The numbers of code lines and
annotation lines are summarized in the paper in Figure 8 and
correspond exactly to the 5 C source files.

\arttime

Artifact Setup: by far the most time will be taken up by installing
and configuring the necessary tools: Frama-C, CVC5, and Z3.  This
might take approximately 1 hour.  Once this has been done, setting up
the artifact is simply downloading it onto the user's local machine.
There should not be any need to adjust anything in the artifact
itself.

Artifact Execution: at most, 5 minutes.

Artifact Analysis: the analysis would consist of reading the output
from Frama-C, checking that the generated \texttt{-vm} files are the
same as those provided in the artifact, and checking the total times.
Approximately 5 minutes.

\artin

\artinpart{Hardware}

Any standard laptop computer with a Unix-style operating system,
including macOS and linux, will do.  No GPU is required.  An MPI
implementation is necessary to run the tests, but that aspect
is not crucial.

\artinpart{Software}

\phantom{x}\vspace{1ex}

\begin{tabular}{lll}
  Frama-C & 31.0 & \url{https://frama-c.com}\\
  Alt-Ergo & 2.6.2 & \url{https://alt-ergo.ocamlpro.com}\\
  CVC5 & 1.3.1 & \url{https://github.com/cvc5}\\
  Z3 & 4.15.4 & \url{https://github.com/z3prover/z3}\\
  Perl & 5.34.1 & \url{https://www.perl.org}
\end{tabular}

Note: for Frama-C and Alt-Ergo, the easiest approach is to install the
OCaml package manager \texttt{opam}, available at
\url{https://opam.ocaml.org}, and use that to install Frama-C and all
of its dependencies.  Instructions are on the Frama-C web page.

Above are the versions we used to run the experiment.  Later versions
will almost certainly work just as well.

\artinpart{Datasets / Inputs}

No datasets are required.  All of the necessary data is inside the
artifact.

\artinpart{Installation and Deployment}

Any C compiler and MPI implementation can be used to compile and
execute these simple programs.  We used Apple clang 17.0.0 and
MPICH 4.3.2.

\artcomp

The \texttt{Makefile} in each directory encodes the workflow.  In each
case: one starts with a C/MPI programs \texttt{foo.c} which has been
annotated with information that will enable its verification.  One
then executes the perl script \texttt{vmfc.pl} on \texttt{foo.c},
which generates \texttt{foo-vm.c}.  Then one invokes Frama-C on
\texttt{foo-vm.c}, which outputs the results to \texttt{stdout}.

In addition, one can compile and execute \texttt{foo.c} in the usual
way, using \texttt{mpicc} and \texttt{mpiexec}.  In the
\texttt{Makefile}, we use 20 processes for this last test.

The main parameter is the ``timeout'', which is an upper bound on the
time that an automated theorem prover is allowed to run on one
verification condition before being killed.  We started with a timeout
of 15 seconds, and if that was not enough to yield a positive result
(due to timeouts), we increased it by 15 seconds, and repeated until a
positive result was achieved.  The maximum timeout used is 60 seconds.

In our tests, there were not big differences in time from repeated
runs.  So we just report the times from a single run.

\artout

For this small study, we manually constructed the table in Figure 8 of
the paper.  For each C file, we simply counted the number of lines
containing any type of ACSL annotation, and the number of lines
containing any type of C code.  (Some lines might contain both.)  The
number of VCs can be read directly from the Frama-C output. The
timeouts were specified by us in the manner described above; different
timeouts might work on different machines.  The total time can be
captured using the command: \texttt{time make}.

\section{Proof of Theorem \ref{thm:main}}
\label{sec:proof}

\begin{proof}
  We start by modifying $P$ by adding variables $\vsc$,
  $\vrc$, and $\vlevel$, and updating these variables immediately after
  each send or receive statement exactly as they are updated in
  $S$.  (We do not add the assertions or the $\kwchoose$
  statement that were inserted into $S$.)  This ``ghost state''
  has no effect on an execution of $P$, nor on its correctness.

  Fix $n\geq 1$ and an assignment of values to the parameters of $P$.
  Let $A$ be the set of states of $P$ with the fixed parameter
  assignment and $n$ processes.  We consider $A$ to be the states of a
  transition system, in which a transition is an atomic step by one
  process, with one exception: a send or receive instruction executes
  atomically with the associated ghost updates.

  We say a state $a\in A$ \emph{satisfies the channel invariant} if all
  buffered messages in $a$ satisfy $\msginv$.  Precisely, for any
  $s,d\in\PID$, let $l$ be the value of $\vrc[s]$ on process $d$ in $a$,
  and let $m$ be the value of $\vsc[d]$ on process $s$ in $a$.  (Note on
  any reachable state, we must have $l\leq m$.)  For any $k$ satisfying
  $l\leq k<m$, the message $x$ at position $k-l$ in queue $s\ra d$ in
  $a$ satisfies $\phi(s,d,k,x)$.  (We assume the oldest message is at
  position $0$ in the queue.)  We will show this invariant is preserved
  on any execution of $A$.

  For $0\leq i<n$, let $B_i$ denote the set of states of $S$
  with the same parameter assignment, $\vNP$ assigned $n$, and $\vpid$
  assigned $i$.  Let $B=B_0\times\cdots\times B_{n-1}$.  We consider $B$
  to be a transition system in which each transition is an atomic step
  from one of the components, with a similar exception: the 4 statements
  simulating a $\kwsend$ or $\kwrecv$ execute in a single atomic step.
  Note the components of $B$ are completely independent: there are no
  message buffers, nor any communication between these components.

  We aim to establish a simulation between the two transition systems
  $A$ and $B$.  There is an obvious mapping from control points in $P$
  to corresponding points in $S$.  Define a binary relation $R$
  on $A\times B$: $aRb$ holds iff for each $i\in0..n-1$, the state of
  process $i$ in $a$ corresponds exactly to that of component $i$ in
  $b$, i.e., the program counters correspond and the values of all
  variables are the same.

  Let $a\stackrel{t}{\ra}a'$ be a transition that occurs on some
  execution in $A$.  Suppose $a$ satisfies the channel invariant and
  $aRb$.  We claim that $a'$ satisfies the channel invariant and there
  is a transition $b{\ra}b'$ in $B$ such that $a'Rb'$.  If $t$ is not a
  communication transition, this is clear, since the values of the
  variables are the same in $a$ and $b$, and $\msginv$ does not depend
  on the state.

  So suppose $t$ is a \kwsend.  The corresponding transition $b\ra b'$
  in $B$ checks assertions (which hold, by the assumption that
  $S$ is correct), and updates $\vsc$ and $\vlevel$ in the same
  way as $t$.  Hence $a'Rb'$.  To see that $a'$ satisfies the channel
  invariant: the old messages in the queue continue to satisfy \msginv,
  and the newly inserted message satisfies \msginv{} since the assertion
  in $S$ passed.

  Suppose $t$ is a \kwrecv.  Let $v$ be the value of the message that
  was received in $A$. In $B$, we are allowed to choose any value
  satisfying $\msginv$.  We choose $v$, which we know satisfies
  $\msginv$ because $a$ satisfies the channel invariant.  This yields a
  transition $b\ra b'$ with $a'Rb'$.  The state $a'$ satisfies the
  channel invariant since we have only removed one message---the
  remaining messages continue to satisfy $\msginv$.

  Now given any execution of $A$, we construct a corresponding execution
  of $B$ by repeatedly using the result above.  Each assertion in $A$
  must hold, because the corresponding assertion in $B$ uses the same
  variable values.  An execution of any component of $B$ is an execution
  of $S$, and we have assumed $S$ is correct.

  The execution of $A$ cannot be infinite, because that would imply at
  least one component of $B$ has an infinite execution, contradicting
  the assumption that $S$ is correct.

  Suppose an execution of $A$ ends without every process terminating.
  If some process in $A$ is blocked at a $\kwchoose$ statement with an
  unsatisfiable condition, the corresponding statement would be blocked
  for the same reason (as the variables have the same values),
  contradicting the correctness of $S$.  Hence $A$ must be
  deadlocked: all non-terminated processes are blocked at receive
  statements for which the buffers are empty.  Each such process is
  waiting to receive a message and obtain a new level, specified by
  $\rlevel$.

  Among all such processes, let $p$ be one with minimal new level.  Say
  $p$ is waiting for message number $k$ from process $s$.  Process $s$
  has already sent $k$ messages to $p$, all of which have been received
  (messages are numbered from $0$). The new level of $p$ will be
  $\rlevel(s,p,k)$, and the new value of $\vrc[s]$ on process $p$ will
  be $k+1$.

  We must have $k<\nummsg(s,p)$ because all executions of $S$ terminate,
  including the execution with $\vpid=p$.  But $\vrc[s]$ only increases,
  and just before termination an assertion is checked which guarantees
  $\vrc[s]=\nummsg(s,p)$.

  Now process $s$ cannot be terminated (else it would have violated its
  final assertion $\vsc[p]=\nummsg(s,p)$).  Hence $s$ is also blocked at
  a receive, say from process $q$, and its new level will be
  $\rlevel(q,s,l)$ for some $l\geq 0$.

  The sequential program with $\vpid=s$ can run to termination, and at
  some point it must simulate send number $k$ to $p$, updating its level
  to $\slevel(s,p,k)$.  Since the level only increases (again, because
  all assertions in program $s$ pass), we have
  \[
    \rlevel(q,s,l) < \slevel(s,p,k).
  \]
  On the other hand, $\slevel(s,p,k)<\rlevel(s,p,k)$ by
  \eqref{eq:specok}.    Hence
  \[
    \rlevel(q,s,l) < \slevel(s,p,k) < \rlevel(s,p,k),
  \]
  contradicting the minimality of $p$'s new level.

  Hence the execution of $P$ does not deadlock, and all processes
  terminate normally, and all assertions hold.  This shows $P$
  is correct.
\end{proof}

\onecolumn

\section{Listings from Case Study}

\subsection{\code{cycsum.c} and \code{Cyclic.h}}

% \inputminted{c}{code/cycsum.c}
% (lstinputlisting) code/cycsum.c
\begin{lstlisting}[style=arxivC, language=C]
/* cycsum.c: MPI sum reduction using cyclic nearest neighbor communication
   This program can be verified using Frama-C+WP.  See Makefile.
 */
#include <assert.h>
#include <stdlib.h>
#include <limits.h>
#include <mpi.h>
#include "Cyclic.h"

typedef double T;
#define DT MPI_DOUBLE
typedef unsigned long ulong;
#define COMM MPI_COMM_WORLD
#define STAT MPI_STATUS_IGNORE
#define TAG 1

int nprocs, rank; // number of processes, rank of this process

#define GLOBALMEM VM_Pointers, &rank, &nprocs

/*@
  // A read-only array, disjoint from all other objects...
  predicate Array(T * X, integer N) =
    \valid_read(X+(0..(N)-1)) && \separated(GLOBALMEM, X+(0..(N)-1));

  predicate Init = 2*VM_NP < INT_MAX && VM_Init && \separated(GLOBALMEM);
*/

/* Sums the x's over all procs (collective).
   Ghost parameter A: array of length VM_NP of x values on each process
*/
/*@ mpi collective;
    mpi universal A[0 .. VM_NP-1];
    requires Init && VM_state==VM_Active && rank==VM_pid && nprocs==VM_NP;
    requires Array(A, VM_NP) && x==A[VM_pid];
    ensures E1: \result == Cyclic_Sum(A, VM_NP, 0, VM_NP);
    ensures E2: VM_UnchangedGlobal{Pre,Here};
    assigns VM_GlobalVars;
 */
T sum(T x) /*@ ghost (\ghost const T * A) */ {
  int left = (rank+nprocs-1)%nprocs;
  int right = (rank+1)%nprocs;
  T result = x;
  /*@ mpi begin regions: nregions = 1; region(i) = 1; */
  /*@ mpi begin region 1:
      nummsg(src,dest) = (dest==(src+VM_NP-1)%VM_NP ? VM_NP-1 : 0);
      mcount(src,dest,idx) = 1;
      mdtype(src,dest,idx) = DT;
      msgtag(src,dest,idx) = TAG;
      msginv(src,dest,idx,buf,count,dt) = *(buf)==A[(src+idx)%VM_NP];
      slevel(src,dest,idx) = idx+1;
  */
  /*@
    loop invariant L1: 1<=i<=nprocs;
    loop invariant L2: x == A[(rank+i-1)%nprocs];
    loop invariant L3: result==Cyclic_Sum(A, nprocs, rank, i);
    loop invariant L4: VM_sc[left] == VM_rc[right] == i-1;
    loop invariant L5: \forall integer j; (0<=j<nprocs && j!=left) ==> VM_sc[j]==0;
    loop invariant L6: \forall integer j; (0<=j<nprocs && j!=right) ==> VM_rc[j]==0;
    loop invariant L7: VM_level == i-1;
    loop assigns i, result, x, VM_level, VM_sc[left], VM_rc[right];
    loop variant nprocs-i;
  */
  for (int i=1; i<nprocs; i++) {
    T y;
    //@ ghost L1: ;
    MPI_Sendrecv(&x, 1, DT, left, TAG, &y, 1, DT, right, TAG, COMM, STAT);
    //@ assert A1: (rank+i)%nprocs == (((rank+1)%nprocs) + i-1 )%nprocs;
    //@ ghost Cyclic_Unchanged(L1, Here, A, nprocs, rank, i);
    x = y;
    result += x;
  }
  /*@ assert \forall integer src; 0<=src<nprocs ==> 
      (src==right <==> rank==(src+nprocs-1)%nprocs); */
  //@ ghost Cyclic_lemma2(A, nprocs, rank);
  //@ mpi end region 1;
  //@ mpi end regions;
  return result;
}

/*@ mpi collective;
    mpi universal A[0 .. VM_NP-1];
    requires Init && VM_state==VM_Uninitialized;
    requires Array(A, VM_NP);
    requires \forall integer i; 0<=i<VM_NP ==> A[i] == (T)i;
    ensures \result;
    ensures VM_UnchangedGlobalExceptState{Pre,Here};
    assigns VM_GlobalVars, nprocs, rank;
 */
_Bool test1(void) /*@ ghost (\ghost const T * A) */ {
  MPI_Init(NULL, NULL);
  MPI_Comm_size(MPI_COMM_WORLD, &nprocs);
  MPI_Comm_rank(MPI_COMM_WORLD, &rank);
  T x = (T)rank, expected = (T)0;
  /*@ mpi begin regions:
      nregions = 1;
      region(i) = sum;
      sum#A(i,j) = j;
  */
  T actual = sum(x) /*@ ghost (A) */;
  //@ mpi end regions;
  MPI_Finalize();
  /*@ loop invariant 0<=i<=nprocs && expected == Cyclic_Sum(A, nprocs, 0, i);
      loop assigns i, expected;
      loop variant nprocs - i;
   */
  for (int i=0; i<nprocs; i++) expected += (T)i;
  return actual == expected;
}

int main(void) {
  assert(test1() /*@ ghost (NULL) */);
}
\end{lstlisting}

% \inputminted{c}{code/Cyclic.h}
% (lstinputlisting) code/Cyclic.h
\begin{lstlisting}[style=arxivC, language=C]
/* Cyclic.h : logical theory of cyclic sums
 */
typedef double T;

/*@
  logic T Cyclic_Sum(T * a, integer length, integer start, integer n) =
    n <= 0 ? (T)0 :
    (T)(Cyclic_Sum(a, length, start, n-1) + a[(start+n-1)%length]);

  lemma Cyclic_Init:
    \forall T * a, integer length, integer start, integer n;
      n <= 0 ==> Cyclic_Sum(a, length, start, n) == 0;

  lemma Cyclic_Next:
    \forall T * a, integer length, integer start, integer n;
      (0<=n<length && 0<=start<length) ==>
      Cyclic_Sum(a, length, start, n+1) ==
        Cyclic_Sum(a, length, start, n) + a[(start+n)%length];

  lemma Cyclic_arith: \forall integer pid,i,NP; 0 <= pid < NP && 0 < i ==>
    ((pid+1)%NP+i)%NP ==  (pid+i+1)%NP;
*/

// "Lemma functions" which can be invoked in ghost code to prove a fact...
/*@ ghost
  /@
    requires 1<=n<=length;
    requires 0<=start<length;
    requires \valid_read(a+(0..n-1));
    ensures Cyclic_Sum(a, length, start, n) ==
      a[start] + Cyclic_Sum(a, length, start+1, n-1);
    assigns \nothing;
  @/
  void Cyclic_lemma1(T const \ghost * a, int length, int start, int n) {
    /@
      loop invariant L0: 1<=i<=n;
      loop invariant L1: Cyclic_Sum(a,length,start,i) == 
        a[start] + Cyclic_Sum(a,length,start+1,i-1);
      loop assigns i;
      loop variant n-i;
    @/
    for (int i=1; i<n; i++);
  }

  /@
    requires 0<=start<n;
    requires \valid_read(a+(0..n-1));
    ensures Cyclic_Sum(a, n, start, n) == Cyclic_Sum(a, n, 0, n);
    assigns \nothing;
  @/
  void Cyclic_lemma2(T const \ghost * a, int n, int start) {
    /@
      loop invariant L0: 0<=i<=start;
      loop invariant L1: Cyclic_Sum(a,n,i,n) == Cyclic_Sum(a,n,0,n);
      loop assigns i;
      loop variant start-i;
    @/  
    for (int i=0; i<start; i++) {
      Cyclic_lemma1(a, n, i, n);
      /@ assert (i+n)%n==i; @/
    }
  }
*/

/* A "lemma macro" to prove that Cyclic_Sum(a, length, start, n)
   doesn't change if the arguments don't */
#define Cyclic_Unchanged(L1, L2, a, length, start, n)                   \
  /@                                                                    \
    loop invariant SU1: 0 <= _i <= n;                                   \
    loop invariant SU2: \forall integer _j; _j==_i ==>                  \
      \at(Cyclic_Sum(a,length,start,_j),L1) ==                          \
      \at(Cyclic_Sum(a,length,start,_j),L2);                            \
    loop assigns _i;                                                    \
    loop variant n - _i;                                                \
  @/                                                                    \
  for (int _i=0; _i<n; _i++);
\end{lstlisting}

\subsection{\code{allsum.c}}

% \inputminted{c}{code/allsum.c}
% (lstinputlisting) code/allsum.c
\begin{lstlisting}[style=arxivC, language=C]
#include <assert.h>
#include <stdlib.h>
#include <limits.h>
#include <mpi.h>

typedef double T;
#define DT MPI_DOUBLE
typedef unsigned long ulong;
#define COMM MPI_COMM_WORLD
#define STAT MPI_STATUS_IGNORE
#define IN_TAG 1
#define OUT_TAG 2

int nprocs, rank; // number of processes, rank of this process

#define GLOBALMEM VM_Pointers, &rank, &nprocs

/*@
  // A read-only array, disjoint from all other objects...
  predicate Array(T * X, integer N) =
    \valid_read(X+(0..(N)-1)) && \separated(GLOBALMEM, X+(0..(N)-1));

  predicate Init = VM_Init && \separated(GLOBALMEM);

  logic T Sum(T * a, integer n) = n <= 0 ? (T)0 : (T)(Sum(a, n-1) + a[n-1]);

  lemma Sum_Init:
    \forall T * a, integer n; n <= 0 ==> Sum(a, n) == 0;

  lemma Sum_Next:
    \forall T * a, integer n; n >= 0 ==> Sum(a, n+1) == (T)(Sum(a, n) + a[n]);
*/

// A "lemma macro" to prove that Sum(X,N) doesn't change if X and N don't:
#define Sum_Unchanged(L1, L2, x, n)                             \
  /@ loop invariant 0 <= _i <= n;                               \
     loop invariant \forall integer _j; _j==_i ==>              \
       \at(Sum(x,_j),L1) == \at(Sum(x,_j),L2);                  \
     loop assigns _i;                                           \
     loop variant n - _i;                                       \
   @/                                                           \
  for (int _i=0; _i<n; _i++);

/* Sums the x's over all procs (collective).
   Ghost parameter X: array of length VM_NP of x values on each process
*/
/*@ mpi collective;
    mpi universal X[0 .. VM_NP-1];
    requires Init && VM_state==VM_Active && rank==VM_pid && nprocs==VM_NP;
    requires Array(X, VM_NP) && x==X[VM_pid];
    ensures \result == Sum(X, VM_NP);
    ensures VM_UnchangedGlobal{Pre,Here};
    assigns VM_GlobalVars;
 */
T sum(T x) /*@ ghost (\ghost const T * X) */ {
  T result;
  /*@ mpi begin regions: nregions = 1; region(i) = 1; */
  /*@ mpi begin region 1:
      nummsg(src,dest) = (src==0 ? (dest==0 ? 0 : 1) : (dest==0 ? 1 : 0));
      mcount(src,dest,idx) = 1;
      mdtype(src,dest,idx) = DT;
      msgtag(src,dest,idx) = (src==0 ? OUT_TAG : IN_TAG);
      msginv(src,dest,idx,buf,count,dt) = 
        *(buf)==(src==0 ? Sum(X, VM_NP) : X[src]);
      slevel(src,dest,idx) = (src==0 ? (ulong)VM_NP+dest-1 : src);
  */
  if (rank == 0) {
    T buf;
    result = x;
    /*@ loop invariant 1<=i<=VM_NP;
        loop invariant result==Sum(X, i);
        loop invariant \forall integer j; 1<=j<i ==> VM_rc[j]==1;
        loop invariant \forall integer j; i<=j<VM_NP ==> VM_rc[j]==0;
        loop invariant VM_level == i-1;
        loop assigns i, result, buf, VM_level, VM_rc[1..VM_NP-1];
        loop variant VM_NP-i;
     */
    for (int i=1; i<nprocs; i++) {
      //@ ghost L1: ;
      MPI_Recv(&buf, 1, DT, i, IN_TAG, COMM, STAT);
      result += buf;
      //@ ghost Sum_Unchanged(L1, Here, X, i+1)
    }
    /*@ loop invariant 1<=i<=VM_NP;
        loop invariant \forall integer j; 1<=j<i ==> VM_sc[j]==1;
        loop invariant \forall integer j; i<=j<VM_NP ==> VM_sc[j]==0;
        loop invariant VM_level == VM_NP+i-2;
        loop assigns i, VM_level, VM_sc[1..VM_NP-1];
        loop variant VM_NP-i;
     */
    for (int i=1; i<nprocs; i++)
      MPI_Send(&result, 1, DT, i, OUT_TAG, COMM);
  } else {
    MPI_Send(&x, 1, DT, 0, IN_TAG, COMM);
    MPI_Recv(&result, 1, DT, 0, OUT_TAG, COMM, STAT);
  }
  //@ mpi end region 1;
  //@ mpi end regions;
  return result;
}

/*@ mpi collective;
    mpi universal X[0 .. VM_NP-1];
    requires Init && VM_state==VM_Uninitialized;
    requires Array(X, VM_NP);
    requires \forall integer i; 0<=i<VM_NP ==> X[i] == (T)i;
    ensures \result;
    ensures VM_UnchangedGlobalExceptState{Pre,Here};
    assigns VM_GlobalVars, nprocs, rank;
 */
_Bool test1(void) /*@ ghost (\ghost const T * X) */ {
  MPI_Init(NULL, NULL);
  MPI_Comm_size(MPI_COMM_WORLD, &nprocs);
  MPI_Comm_rank(MPI_COMM_WORLD, &rank);
  T x = (T)rank, expected = (T)0;
  /*@ mpi begin regions:
      nregions = 1;
      region(i) = sum;
      sum#X(i,j) = j;
  */
  T actual = sum(x) /*@ ghost (X) */;
  //@ mpi end regions;
  MPI_Finalize();
  /*@ loop invariant 0<=i<=VM_NP && expected == Sum(X,i);
      loop assigns i, expected;
      loop variant VM_NP-i;
   */
  for (int i=0; i<nprocs; i++) expected += (T)i;
  return actual==expected;
}

int main(void) {
  assert(test1() /*@ ghost (NULL) */);
}
\end{lstlisting}

\subsection{\code{diffuse1d.c}}

% \inputminted{c}{code/diffuse1d.c}
% (lstinputlisting) code/diffuse1d.c
\begin{lstlisting}[style=arxivC, language=C]
#include <assert.h>
#include <limits.h>
#include <mpi.h>
#include <stdlib.h>

#define FIRST(pid,nx,np) ((pid)*(nx)/(np))
#define NXL(pid,nx,np) (FIRST((pid)+1,nx,np) - FIRST(pid,nx,np))
#define COMM MPI_COMM_WORLD
#define STAT MPI_STATUS_IGNORE
#define TAG 1

int nprocs, rank; // number of processes, rank of this process
unsigned int NSTEP, nxl;
unsigned long first, NX;
double K, * u, * u_new;

/*@
  axiomatic D {
    logic double diffuse0(integer i);
    axiom DF{L1,L2}: \forall integer i;
      \at(diffuse0(i), L1) == \at(diffuse0(i), L2);
  }

  logic double diffuse1(double c, double left, double x, double right) = 
    (double)(x+c*(right+left-2.0*x));

  logic double diffuse(integer time, integer nx, double c, integer i)
  = time <= 0 ? diffuse0(i) :
    i == 0 ? diffuse0(0) :
    i == nx+1 ? diffuse0(nx+1) :
    diffuse1(c, diffuse(time-1, nx, c, i-1),
                diffuse(time-1, nx, c, i),
                diffuse(time-1, nx, c, i+1));

  // This obvious lemma speeds things up...
  lemma Diffuse0: \forall integer nx, double c, integer i;
    diffuse(0, nx, c, i) == diffuse0(i);

  lemma DiffuseBoundLeft: \forall integer time, integer nx, double c;
    diffuse(time, nx, c, 0) == diffuse0(0);

  lemma DiffuseBoundRight: \forall integer time, integer nx, double c;
    diffuse(time, nx, c, nx+1) == diffuse0(nx+1);

  predicate isDiffuse(integer time, integer f, double * p,
    integer lo, integer hi)
  = \forall integer i; lo<=i<=hi ==> p[i] == diffuse(time, NX, K, f+i);

  predicate Boundaries(double * p) =
    (VM_pid==0 ==> p[0]==diffuse0(0)) &&
    (VM_pid==VM_NP-1 ==> p[nxl+1]==diffuse0(NX+1));

  predicate Init(integer n) = VM_Init &&
    \valid(u+(0..n-1)) && \valid(u_new+(0..n-1)) &&
    \separated(VM_Pointers, &rank, &nprocs, &NSTEP, &nxl, &first,
      &NX, &K, u+(0..n-1), u_new+(0..n-1));
*/

/* Computes one time step.  Given u[0..nxl+1], computes
   u_new[1..nxl]. */
/*@ requires Init(nxl+2) && nxl < UINT_MAX && first <= NX - nxl;
    requires isDiffuse(time, first, u, 0, nxl+1);
    ensures  isDiffuse(time+1, first, u_new, 1, nxl);
    assigns  u_new[1..nxl];
*/
void update(void) /*@ ghost (unsigned int time) */ {
  /*@ loop invariant 1 <= i <= nxl+1;
      loop invariant \forall integer j; 1 <= j < i ==>
        u_new[j] == diffuse1(K, u[j-1], u[j], u[j+1]);
      loop assigns i, u_new[1..nxl];
      loop variant nxl - i; */
  for (unsigned int i=1; i<=nxl; i++)
    u_new[i] = u[i]+K*(u[i+1]+u[i-1]-2.0*u[i]);
}

/* Performs one ghost cell exchange, in two parts.  In part 1,
   everyone sends to their right and receives from their left.  In
   part 2, everyone sends left and receives from right.  Exceptions
   are made for the first and last processes. */
/*@ mpi collective;
    mpi universal time, NX, K;
    requires R1: Init(nxl+2) && VM_state == VM_Active;
    requires R2: rank == VM_pid && nprocs == VM_NP <= NX;
    requires R3: first == FIRST(rank,NX,nprocs);
    requires R4: nxl == NXL(rank,NX,nprocs) < UINT_MAX;
    requires D:  isDiffuse(time, first, u, 1, nxl);
    requires B:  Boundaries(u);
    ensures  D:  isDiffuse(time, first, u, 0, nxl+1);
    ensures  B:  Boundaries(u);
    ensures  U:  VM_UnchangedGlobal{Pre,Here};
    assigns      VM_GlobalVars, u[0], u[nxl+1];
 */
void exchange(void) /*@ ghost (unsigned int time) */ {
  //@ assert A1: VM_pid==VM_NP-1 ==> u[nxl+1]==diffuse(time, NX, K, first+nxl+1);
  int left = rank > 0 ? rank - 1 : MPI_PROC_NULL;
  int right = rank < nprocs - 1 ? rank + 1 : MPI_PROC_NULL;

  /*@ mpi begin regions: nregions = 2; region(i) = i+1; */
  /*@ mpi begin region 1:
      nummsg(s,d) = ((s<VM_NP-1 && d==s+1) ? 1 : 0);
      mcount(s,d,idx) = 1;
      mdtype(s,d,idx) = MPI_DOUBLE;
      msgtag(s,d,idx) = TAG;
      msginv(s,d,idx,buf,count,dt) = *(buf) == diffuse(time, NX, K, (d)*NX/VM_NP);
      slevel(s,d,idx) = 1;
  */
  MPI_Sendrecv(u+nxl, 1, MPI_DOUBLE, right, TAG,
	       u, 1, MPI_DOUBLE, left, TAG, COMM, STAT);
  //@ mpi end region 1;
  /*@ mpi begin region 2:
      nummsg(s,d) = ((s>0 && d==s-1) ? 1 : 0);
      mcount(s,d,idx) = 1;
      mdtype(s,d,idx) = MPI_DOUBLE;
      msgtag(s,d,idx) = TAG;
      msginv(s,d,idx,buf,count,dt) =
        *(buf) == diffuse(time, NX, K, FIRST(s,NX,VM_NP)+1);
      slevel(s,d,idx) = 1;
  */
  MPI_Sendrecv(u+1, 1, MPI_DOUBLE, left, TAG,
	       u+nxl+1, 1, MPI_DOUBLE, right, TAG, COMM, STAT);
  //@ mpi end region 2;
  //@ mpi end regions;
  //@ assert D1: isDiffuse(time, first, u, 1, nxl);
  //@ assert D2: u[0] == diffuse(time, NX, K, first);
  //@ assert D3: u[nxl+1] == diffuse(time, NX, K, first+nxl+1);
  //@ assert D4: \forall integer i; 0<=i<=nxl+1 ==> u[i] == diffuse(time, NX, K, first+i);
}

/* The initial values of diffuse.  Do not verify. */
/*@ requires nx>=0 && 0<=i<=nx+1;
    ensures \result == diffuse0(i);
    assigns \nothing;
*/    
double initialCond(unsigned long nx, unsigned long i) {
  return (i == 0 || i == nx+1)  ? 0.0 : 100.0;
}

/* Runs the simulation */
/*@ mpi collective;
    mpi universal NSTEP, NX, K;
    requires R1:  Init(nxl+2);
    requires R2:  VM_state == VM_Active;
    requires R3:  rank == VM_pid && nprocs == VM_NP <= NX < ULONG_MAX;
    requires R4:  nxl == NXL(rank, NX, nprocs) < UINT_MAX;
    requires R5:  first == FIRST(rank, NX, nprocs);
    ensures  E1:  isDiffuse(NSTEP, first, u, 1, nxl);
    ensures  E2:  VM_UnchangedGlobal{Pre,Here};
    assigns       VM_GlobalVars, u, u_new, u[0..nxl+1], u_new[0..nxl+1];
 */
void run(void) {
  //@ assert HARD: first + nxl <= NX;
  if (rank == 0)
    u[0] = u_new[0] = initialCond(NX, 0);
  if (rank == nprocs - 1)
    u[nxl+1] = u_new[nxl+1] = initialCond(NX, NX+1);
  /*@ loop invariant I1: 1 <= i <= nxl+1;
      loop invariant I2: isDiffuse(0, first, u, 1, i-1);
      loop assigns       i, u[1..nxl];
      loop variant       nxl-i;
  */
  for (unsigned int i = 1; i <= nxl; i++)
    u[i] = initialCond(NX, first+i);

  /*@ mpi begin regions:
      nregions = NSTEP;
      region(i) = exchange;
      exchange#time(i) = i;
      exchange#NX(i) = NX;
      exchange#K(i) = K;
  */
  //@ ghost double * u1 = u, * u2 = u_new;
  /*@ loop invariant I3: Init(nxl+2);
      loop invariant I4: time <= NSTEP;
      loop invariant I5: isDiffuse(time, first, u, 1, nxl);
      loop invariant I6: Boundaries(u) && Boundaries(u_new);
      loop invariant I7: (u==u1 && u_new==u2) || (u==u2 && u_new==u1);
      loop invariant I8: VM_UnchangedGlobal{Pre,Here};
      loop invariant I9: VM_regionCount == time;
      loop assigns       VM_GlobalVars, VM_regionCount, time, u, u_new,
                         u1[0..nxl+1], u2[0..nxl+1];
      loop variant       NSTEP - time;  */
  for (unsigned int time = 0; time < NSTEP; time++) {
    exchange() /*@ ghost(time) */;
    update() /*@ ghost(time) */;
    double * tmp = u; u = u_new; u_new = tmp;
  }
  //@ mpi end regions;
}

/* Runs a smallish test. Do not verify. */
int main(void) {
  NX = 100, K = 1.0/100.0, NSTEP = 10;
  MPI_Init(NULL, NULL);
  MPI_Comm_size(MPI_COMM_WORLD, &nprocs);
  MPI_Comm_rank(MPI_COMM_WORLD, &rank);
  first = FIRST(rank, NX, nprocs);
  nxl = NXL(rank, NX, nprocs);
  u = malloc((nxl+2)*sizeof(double));
  u_new = malloc((nxl+2)*sizeof(double));
  run();
  MPI_Finalize();
  free(u);
  free(u_new);
}
\end{lstlisting}

\subsection{\code{bcast.c}}

% \inputminted{c}{code/bcast.c}
% (lstinputlisting) code/bcast.c
\begin{lstlisting}[style=arxivC, language=C]
#include <assert.h>
#include <mpi.h>
#include <stdlib.h>

typedef double T;
#define DT MPI_DOUBLE

/*@
  predicate Equal(T * p, T * q, integer n) =
  \forall integer i; 0<=i<n ==> p[i]==q[i];
*/

/*@ ghost
  /@ requires n>=0 && \valid(p+(0..n-1));
     terminates \true;
     ensures \forall integer i; 0<=i<n ==> p[i]==(T)i;
     assigns p[0..n-1];
     exits \false;
   @/
  void setLinear(\ghost T * p, int n);
*/

/*@ mpi collective;
    mpi universal count, datatype, root, Data[0..count-1];
    requires count >= 0;
    requires comm == MPI_COMM_WORLD;
    requires VM_Init && VM_state == VM_Active;
    requires 0 <= root < VM_NP;
    requires \valid(buf + (0 .. count-1));
    requires \valid_read(Data + (0 .. count-1));
    requires \separated(VM_Pointers, buf+(0 .. count-1));
    requires VM_pid == root ==> Equal(buf, Data, count);
    ensures Equal(buf, Data, count);
    ensures VM_UnchangedGlobal{Pre,Here};
    assigns buf[0 .. count-1], VM_GlobalVars;
 */
int Bcast(T * buf, int count, 
	  MPI_Datatype datatype, int root, MPI_Comm comm)
/*@ ghost ( \ghost const T * const Data ) */
{
  int nprocs, rank;
  int tag = 999;

  MPI_Comm_size(comm, &nprocs);
  MPI_Comm_rank(comm, &rank);
  /*@ mpi begin regions: nregions = 1; region(i) = 1; */
  /*@ mpi begin region 1:
      nummsg(src,dest) = ((src==root && src!=dest) ? 1 : 0);
      mcount(src,dest,idx) = count;
      mdtype(src,dest,idx) = datatype;
      msgtag(src,dest,idx) = 999;
      msginv(src,dest,idx,buf,count,dt) = Equal(buf, Data, count);
      slevel(src,dest,idx) = dest+1;
   */
  if (rank == root) {
    /*@ loop invariant 0<=i<=nprocs;
        loop invariant nprocs == VM_NP;
	loop invariant Equal(buf, Data, count);
	loop invariant VM_level == (i==root+1 ? root : i);
	loop invariant \forall integer j; 0<=j<i ==>
	  VM_sc[j] == (j==root ? 0 : 1);
	loop invariant \forall integer j; i<=j<VM_NP ==>
	  VM_sc[j] == 0;
	loop assigns i, VM_level, VM_sc[0..VM_NP-1];
	loop variant nprocs - i;
     */
    for (int i = 0; i < nprocs; i++)
      if (i != root)
        MPI_Send(buf, count, datatype, i, tag, comm);
  } else
    MPI_Recv(buf, count, datatype, root, tag, comm,
             MPI_STATUS_IGNORE);
  //@ mpi end region 1;
  //@ mpi end regions;
  return 0;
}

/*@ mpi collective;
    mpi universal n;
    requires VM_Init && VM_state == VM_Active;
    requires n >= 0;
    requires \valid(buf + (0 .. n-1));
    requires \valid(Data + (0 .. n-1));
    requires \separated(VM_Pointers, buf+(0..(n-1)), Data+(0..(n-1)));
    ensures \forall integer i; 0<=i<n ==> buf[i] == (T)i;
    ensures VM_UnchangedGlobal{Pre,Here};
    assigns VM_GlobalVars, buf[0..n-1], Data[0..n-1];
 */
void test(int n, T * buf) /*@ ghost (\ghost T * Data) */ {
  int rank, nprocs;
  MPI_Comm_size(MPI_COMM_WORLD, &nprocs);
  MPI_Comm_rank(MPI_COMM_WORLD, &rank);
  /*@ ghost setLinear(Data, n); */
  if (rank == 0) {
    /*@ loop invariant L1: 0<=i<=n;
        loop invariant L2: \forall integer j; 0<=j<i ==> buf[j]==(T)j;
	loop assigns i, buf[0 .. n-1];
	loop variant n-i;
    */
    for (int i=0; i<n; i++)
      buf[i] = (T)i;
  }
  /*@ mpi begin regions:
      nregions = 1;
      region(i) = Bcast;
      Bcast#count(i) = n;
      Bcast#datatype(i) = DT;
      Bcast#root(i) = 0;
      Bcast#Data(i,j) = (T)j;
  */
  Bcast(buf, n, DT, 0, MPI_COMM_WORLD) /*@ ghost (Data) */;
  //@ mpi end regions;
}

int main(void) {
  int rank, n=10;
  T * buf = malloc(n*sizeof(T));
  MPI_Init(NULL, NULL);
  MPI_Comm_rank(MPI_COMM_WORLD, &rank);
  if (rank == 0) {
    for (int i=0; i<n; i++)
      buf[i] = (T)i;
  }
  // Data is irrelevant because we're not verifying this main function
  test(n, buf) /*@ ghost (NULL) */;
  MPI_Finalize();
  for (int i=0; i<n; i++)
    assert(buf[i] == (T)i);
  free(buf);
}
\end{lstlisting}

\subsection{\code{gather.c}}

% \inputminted{c}{code/gather.c}
% (lstinputlisting) code/gather.c
\begin{lstlisting}[style=arxivC, language=C]
#include <assert.h>
#include <limits.h>
#include <mpi.h>
#include <stdlib.h>

typedef double T;
#define DT MPI_DOUBLE

/*@
  predicate Equal(T * p, T * q, integer lo, integer hi) =
    \forall integer i; lo<=i<hi ==> p[i]==q[i];

  predicate Unchanged{K,L}(T * p, integer lo, integer hi) =
    \forall integer i; lo<=i<hi ==> \at(p[i],K) == \at(p[i],L);

  lemma EqUnchanged{K,L}:
    \forall T * p, T * q, integer lo, hi;
    (Equal{K}(p, q, lo, hi) && Unchanged{K,L}(p, lo, hi) &&
     Unchanged{K,L}(q, lo, hi)) ==>
    Equal{L}(p, q, lo, hi);

  lemma EqConcat:
    \forall T * p, T * q, integer a,b,c;
    (0<=a<=b<=c && Equal(p, q, a, b) && Equal(p, q, b, c))
      ==> Equal(p, q, a, c);
*/

/* The point of the following lemma functions is to say that if
   p+offset and q+offset are equal from lo to hi, then p and q are
   equal from lo+offset to hi+offset. */

/*@ ghost

/@ requires  0<=lo<=i<hi<=INT_MAX;
   requires \valid_read(p+(lo..(hi-1)));
   requires \valid_read(q+(lo..(hi-1)));
   requires Equal(p, q, lo, hi);
   ensures p[i] == q[i];
   assigns \nothing;
@/
void Arrays_Eq(T * p, T const \ghost * q, int lo, int hi, int i) {
}

/@ requires 0<=lo<=hi;
   requires offset >= 0;
   requires hi+offset <= INT_MAX;
   requires \valid_read(p+((offset+lo) .. (offset+hi-1)));
   requires \valid_read(q+((offset+lo) .. (offset+hi-1)));
   requires Equal(p+offset, q+offset, lo, hi);
   ensures Equal(p, q, lo+offset, hi+offset);
   assigns \nothing;
 @/
void Arrays_EqShift(T * p, T const \ghost * q, int lo, int hi, int offset) {
  /@ loop invariant lo <= i <= hi;
     loop invariant \forall integer k; lo+offset <= k < i+offset ==> p[k] == q[k];
     loop assigns i;
     loop variant hi-i;
   @/
  for (int i=lo; i<hi; i++) {
    Arrays_Eq(p+offset, q+offset, lo, hi, i);
  }
}

*/

/*@ ghost
  /@ requires n>=0 && \valid(p+(0..n-1));
     terminates \true;
     ensures \forall integer i; 0<=i<n ==> p[i]==(T)i;
     assigns p[0..n-1];
     exits \false;
   @/
  void setLinear(\ghost T * p, int n);
*/

/*@ mpi collective;
    mpi universal sendcount, sendtype, root, Data[0..(sendcount*VM_NP-1)];
    requires comm == MPI_COMM_WORLD && VM_Init && VM_state == VM_Active;
    requires 0 <= root < VM_NP;
    requires sendcount >= 0;
    requires sendcount * VM_NP <= INT_MAX;
    requires \valid_read(sendbuf + (0 .. sendcount-1));
    requires \valid_read(Data + (0 .. (sendcount*VM_NP-1)));
    requires VM_pid == root ==> recvcount == sendcount;
    requires VM_pid == root ==> recvtype == sendtype;
    requires VM_pid == root ==> \valid(recvbuf + (0 .. (sendcount*VM_NP-1)));
    requires \separated(VM_Pointers, sendbuf+(0 .. sendcount-1),
      recvbuf+(0 .. (sendcount*VM_NP-1)), Data+(0 .. (sendcount*VM_NP-1)));
    requires Equal(sendbuf, Data+sendcount*VM_pid, 0, sendcount);
    ensures VM_pid == root ==> Equal(recvbuf, Data, 0, sendcount*VM_NP);
    ensures VM_UnchangedGlobal{Pre,Here};
    assigns recvbuf[0..(sendcount*VM_NP-1)], VM_GlobalVars;
 */
int Gather(T * sendbuf, int sendcount, MPI_Datatype sendtype,
	   T * recvbuf, int recvcount, MPI_Datatype recvtype,
	   int root, MPI_Comm comm)
/*@ ghost ( \ghost const T * const Data ) */
{
  int nprocs, rank, tag = 998;
  MPI_Comm_size(comm, &nprocs);
  MPI_Comm_rank(comm, &rank);
  /*@ mpi begin regions: nregions = 1; region(i) = 1; */
  /*@ mpi begin region 1:
      nummsg(src,dest) = ((dest==root && src!=dest) ? 1 : 0);
      mcount(src,dest,idx) = sendcount;
      mdtype(src,dest,idx) = sendtype;
      msgtag(src,dest,idx) = 998;
      msginv(src,dest,idx,buf,count,dt) = Equal(buf, Data+src*count, 0, count);
      slevel(src,dest,idx) = src+1;
   */
  if (rank == root) {
    /*@ loop invariant I1: 0<=i<=nprocs;
	loop invariant I2: Unchanged{Pre,Here}(Data+sendcount*VM_pid, 0, sendcount);
	loop invariant I3: Equal(recvbuf, Data, 0, i*recvcount);
	loop invariant I4: VM_level == (i==root+1 ? root : i);
	loop invariant I5: \forall integer j; 0<=j<i ==> VM_rc[j] == (j==root ? 0 : 1);
	loop invariant I6: \forall integer j; i<=j<VM_NP ==> VM_rc[j]==0;
	loop invariant I7: \forall integer j; 0<=j<VM_NP ==> VM_sc[j] == 0;
	loop assigns i, VM_level, VM_rc[0 .. VM_NP-1], recvbuf[0 .. (sendcount*VM_NP-1)];
	loop variant nprocs - i;
    */
    for (int i = 0; i < nprocs; i++)
      if (i == root) {
	/*@ loop invariant J1: 0 <= j <= recvcount == sendcount;
	    loop invariant J2: Unchanged{Pre,Here}(Data + recvcount*rank, 0, recvcount);
	    loop invariant J3: Equal(sendbuf, Data + recvcount*rank, 0, recvcount);
	    loop invariant J4: Equal(recvbuf, Data, 0, i*recvcount + j);
	    loop assigns j, recvbuf[(i*recvcount) .. (recvcount*(i+1) - 1)];
            loop variant recvcount - j;
	 */
	for (int j = 0; j < recvcount; j++) {
	  //@ assert A1: recvcount*i + recvcount - 1 <= recvcount*VM_NP - 1;
	  recvbuf[i*recvcount + j] = sendbuf[j];
	  //@ assert A2: recvbuf[i*recvcount + j] == Data[i*recvcount + j];
	  //@ assert A3: Equal(recvbuf, Data, 0, i*recvcount + j + 1);
	}
      } else {
	//@ assert B1: i*recvcount + recvcount - 1 <= nprocs*sendcount - 1;
        MPI_Recv(recvbuf+i*recvcount, recvcount, recvtype, i, tag, comm,
		 MPI_STATUS_IGNORE);
	//@ assert B2: sendcount*VM_pid + sendcount <= nprocs*sendcount;
	//@ ghost Arrays_EqShift(recvbuf, Data, 0, recvcount, i*recvcount);
      }
  } else {
    MPI_Send(sendbuf, sendcount, sendtype, root, tag, comm);
  }
  //@ mpi end region 1;
  //@ mpi end regions;
  return 0;
}

/*@ mpi collective;
    mpi universal n;
    requires VM_Init && VM_state == VM_Active;
    requires n>=0 && \valid(sbuf + (0 .. n-1));
    requires n * VM_NP <= INT_MAX;
    requires VM_pid == 0 ==> \valid(rbuf + (0 .. (n*VM_NP - 1)));
    requires \valid(Data + (0 .. (n*VM_NP - 1)));
    requires \separated(VM_Pointers, sbuf + (0..(n-1)),
      rbuf + (0..(n*VM_NP-1)), Data+(0..(n*VM_NP-1)));
    ensures VM_pid==0 ==> \forall integer i; 0<=i<VM_NP*n ==> rbuf[i] == (T)i;
    ensures VM_UnchangedGlobal{Pre,Here};
    assigns sbuf[0 .. (n-1)], rbuf[0 .. (n*VM_NP - 1)],
      Data[0 .. (n*VM_NP-1)], VM_GlobalVars;
 */
void test(int n, T * sbuf, T * rbuf) /*@ ghost (\ghost T * Data) */ {
  int rank, nprocs;
  MPI_Comm_size(MPI_COMM_WORLD, &nprocs);
  MPI_Comm_rank(MPI_COMM_WORLD, &rank);
  /*@ ghost setLinear(Data, nprocs*n); */
  /*@ loop invariant 0<=i<=n;
      loop invariant \forall integer j; 0<=j<i ==> sbuf[j]==(T)(rank*n+j);
      loop invariant Equal(sbuf, Data+n*rank, 0, i);
      loop assigns i, sbuf[0 .. n-1];
      loop variant n-i;
  */
  for (int i=0; i<n; i++) {
    //@ assert OF2: 0 <= rank*n <= (VM_NP-1)*n;
    sbuf[i] = (T)(rank*n + i);
  }
  /*@ mpi begin regions:
      nregions = 1;
      region(i) = Gather;
      Gather#sendcount(i) = n;
      Gather#sendtype(i) = DT;
      Gather#root(i) = 0;
      Gather#Data(i,j) = j;
  */
  Gather(sbuf, n, DT, rbuf, n, DT, 0, MPI_COMM_WORLD) /*@ ghost (Data) */;
  //@ mpi end regions;
}

int main(void) {
  MPI_Init(NULL, NULL);
  int rank, nprocs, n = 5;
  MPI_Comm_size(MPI_COMM_WORLD, &nprocs);
  MPI_Comm_rank(MPI_COMM_WORLD, &rank);
  T * sbuf = malloc(n*sizeof(T));
  T * rbuf = rank == 0 ? malloc(n*nprocs*sizeof(T)) : NULL;
  test(n, sbuf, rbuf) /*@ ghost (NULL) */;
  MPI_Finalize();
  if (rank == 0) {
    for (int i=0; i<n*nprocs; i++)
      assert(rbuf[i] == (T)i);
  }
  free(sbuf);
  if (rank == 0) free(rbuf);
}
\end{lstlisting}

\subsection{\code{mpi.h}}

% \inputminted{c}{code/mpi.h}
% (lstinputlisting) code/mpi.h
\begin{lstlisting}[style=arxivC, language=C]
#ifndef _VM_MPI
#define _VM_MPI
/* Verified MPI: MPI header file for Frama-C+WP verification */
#include <stdlib.h>

typedef unsigned long ulong;

#define VM_NO_REGION 0
#define VM_MAX(X,Y) ((X)>=(Y)?(X):(Y))
#define VM_CONCAT(X,Y) X##Y
#define VM_MAC1(X,fid) X##_##fid
#define VM_MAC2(X,fid,rid) X##_##fid##_##rid
#define VM_EXPECTED(fid,i) VM_MAC1(VM_expectedRegion,fid)(i)
#define VM_NUMMSG(fid,rid,s,d) VM_MAC2(VM_nummsg,fid,rid)(s,d)
#define VM_MCOUNT(fid,rid,s,d,j) VM_MAC2(VM_mcount,fid,rid)(s,d,j)
#define VM_MDTYPE(fid,rid,s,d,j) VM_MAC2(VM_mdtype,fid,rid)(s,d,j)
#define VM_MSGTAG(fid,rid,s,d,j) VM_MAC2(VM_msgtag,fid,rid)(s,d,j)
#define VM_MSGINV(fid,rid,s,d,j,buf,count,dt) \
  VM_MAC2(VM_msginv,fid,rid)(s,d,j,buf,count,dt)
#define VM_SLEVEL(fid,rid,s,d,j) VM_MAC2(VM_slevel,fid,rid)(s,d,j)
#define VM_HAVOCA(fid,rid,buf) VM_MAC2(VM_havoca,fid,rid)(buf)
typedef int MPI_Datatype;
#define MPI_DOUBLE 100
typedef int MPI_Comm;
#define MPI_COMM_WORLD 999
#define MPI_PROC_NULL -1
#define MPI_ANY_TAG -1
typedef int MPI_Status;
#define MPI_STATUS_IGNORE (MPI_Status*)0
// Note: the following does not include constants VM_NP and VM_pid:
#define VM_GlobalVars VM_state, VM_done, VM_sc[0..VM_NP-1], VM_rc[0..VM_NP-1]
#define VM_LocalVars VM_currentRegion, VM_regionCount, VM_nregions, VM_level
#define VM_Vars VM_GlobalVars, VM_LocalVars
// The following includes pointers to all global memory in VM:
#define VM_Pointers VM_sc+(0..VM_NP-1), VM_rc+(0..VM_NP-1), &VM_NP, \
    &VM_pid, &VM_state, &VM_done
#define VM_Init 0<=VM_pid<VM_NP &&                              \
    \valid(VM_sc+(0..VM_NP-1)) && \valid(VM_rc+(0..VM_NP-1)) && \
    (\forall integer i; 0<=i<VM_NP ==> VM_sc[i]==VM_rc[i]==0)

// Global ghost code included once for all programs...
/*@ ghost

  typedef enum VM_State {
    VM_Uninitialized,
    VM_Active,
    VM_Finalized
  } VM_State;

  VM_State VM_state = VM_Uninitialized;
  _Bool VM_done; // flag set when regions complete
  int VM_NP, VM_pid; // number of processes, ID number of this process
  unsigned int \ghost * VM_sc, * VM_rc;

  /@ requires \valid(p+(0..n-1));
     terminates \true;
     ensures \forall integer i; 0<=i<n ==> p[i]==0;
     assigns p[0..n-1];
     exits \false;
   @/
  void VM_zeroOut(unsigned int n, \ghost unsigned int * p);
*/

// Logic predicates/functions...

/*@
  predicate VM_UnchangedGlobalExceptState{K,L}  =
       (\forall integer i; 0<=i<\at(VM_NP,K) ==>
        \at(VM_sc[i],K)==\at(VM_sc[i],L))
    && (\forall integer i; 0<=i<\at(VM_NP,K) ==>
        \at(VM_rc[i],K)==\at(VM_rc[i],L));

  predicate VM_UnchangedGlobal{K,L} = VM_UnchangedGlobalExceptState{K,L} &&
    \at(VM_state,K) == \at(VM_state,L);
 */

#define VM_UnchangedLocal(K,L) \
  \at(VM_currentRegion,K)==\at(VM_currentRegion,L) &&   \
    \at(VM_regionCount,K)==\at(VM_regionCount,L) &&     \
    \at(VM_nregions,K)==\at(VM_nregions,L) &&           \
    \at(VM_level,K)==\at(VM_level,L)

#define VM_Unchanged(K,L)                               \
  VM_UnchangedGlobal{K,L} && VM_UnchangedLocal{K,L}

// Havoc functions needed in non-ghost code for receives...

//void VM_havoc_double(double * buf) /*@ ghost(int count) */;

#define VM_HAVOC_CONTRACT            \
  requires \valid(buf+(0..count-1)); \
  terminates \true;                  \
  assigns buf[0..count-1];           \
  exits \false;

/*@ VM_HAVOC_CONTRACT */
void VM_havoc_int(int * buf, int count);

/*@ VM_HAVOC_CONTRACT */
void VM_havoc_float(float * buf, int count);

/*@ VM_HAVOC_CONTRACT */
void VM_havoc_double(double * buf, int count);

// add havocs for more basic C types here as needed.

#define VM_havoc(p, n)                          \
  _Generic((p),                                 \
    int*: VM_havoc_int(p, n),                   \
    float*: VM_havoc_float(p, n),               \
    double*: VM_havoc_double(p, n)              \
  )

// the following is a trick to (seemingly) get a ghost value into the
// non-ghost state...
/*@ terminates \true;
    ensures \result == x;
    assigns \nothing;
    exits \false;
*/
int VM_anyint(void) /*@ ghost (int x) */;

// begin a region.  call in ghost code only:
#define VM_Region_begin(fid,rid)                                \
  /@ assert VM_EB1: VM_currentRegion == VM_NO_REGION; @/        \
  /@ assert VM_EB2: rid == VM_EXPECTED(fid,VM_regionCount); @/  \
  VM_currentRegion = rid;                                       \
  VM_regionCount++;

// end a region.  call in ghost code only:
#define VM_Region_end(fid,rid)                                          \
  /@ assert VM_EE1: VM_currentRegion == rid; @/                         \
  /@ assert VM_EE2: \forall integer _i;                                 \
     0<=_i<VM_NP ==> VM_sc[_i]==VM_NUMMSG(fid,rid,VM_pid,_i); @/        \
  /@ assert VM_EE3: \forall integer _i;                                 \
     0<=_i<VM_NP ==> VM_rc[_i]==VM_NUMMSG(fid,rid,_i,VM_pid); @/        \
  VM_zeroOut(VM_NP, VM_sc);                                             \
  VM_zeroOut(VM_NP, VM_rc);                                             \
  VM_level = 0;                                                         \
  VM_currentRegion = VM_NO_REGION;

// receive a message.   call in ghost code only:
#define VM_recv(fid,rid,buf,count,dt,src,tag,comm,stat)                 \
  /@ assert VM_R1: VM_state == VM_Active; @/                            \
  /@ assert VM_R2: rid==VM_currentRegion; @/                            \
  if (src != MPI_PROC_NULL) {                                           \
    unsigned _rc = VM_rc[src];                                          \
    int _mcount = VM_MCOUNT(fid,rid,src,VM_pid,_rc);                    \
    ulong _rlevel = VM_SLEVEL(fid,rid,src,VM_pid,_rc);                  \
    /@ assert VM_R3:  _mcount <= count; @/                              \
    /@ assert VM_R4:  \valid((buf)+(0 .. _mcount-1)); @/                \
    /@ assert VM_R5:  dt == VM_MDTYPE(fid,rid,src,VM_pid,_rc); @/       \
    /@ assert VM_R6:  0<=src<VM_NP; @/                                  \
    /@ assert VM_R7:  tag==MPI_ANY_TAG ||                               \
                      tag==VM_MSGTAG(fid,rid,src,VM_pid,_rc); @/        \
    /@ assert VM_R8:  comm==MPI_COMM_WORLD; @/                          \
    /@ assert VM_R9:  stat==MPI_STATUS_IGNORE; @/                       \
    /@ assert VM_R10: _rlevel > VM_level; @/                            \
    VM_level = _rlevel;                                                 \
    /@ admit VM_MSGINV(fid,rid,src,VM_pid,_rc,buf,_mcount,dt); @/       \
    VM_rc[src]++;                                                       \
  }   

// the following goes in ghost code only:
#define VM_send(fid,rid,buf,count,dt,dest,tag,comm)                     \
  /@ assert VM_S1: VM_state == VM_Active; @/                            \
  /@ assert VM_S2: rid==VM_currentRegion; @/                            \
  if (dest != MPI_PROC_NULL) {                                          \
    unsigned _sc = VM_sc[dest];                                         \
    int _mcount = VM_MCOUNT(fid,rid,src,VM_pid,_sc);                    \
    ulong _slevel = VM_SLEVEL(fid,rid,VM_pid,dest,_sc);                 \
    /@ assert VM_S3: count == _mcount; @/                               \
    /@ assert VM_S4: \valid_read((buf)+(0 .. _mcount-1)); @/            \
    /@ assert VM_S5: dt==VM_MDTYPE(fid,rid,VM_pid,dest,_sc); @/         \
    /@ assert VM_S6: 0<=dest<VM_NP; @/                                  \
    /@ assert VM_S7: tag==VM_MSGTAG(fid,rid,VM_pid,dest,_sc); @/        \
    /@ assert VM_S8: comm==MPI_COMM_WORLD; @/                           \
    /@ assert VM_S9: _slevel > VM_level; @/                             \
    VM_level = _slevel;                                                 \
    /@ assert VM_S10:                                                   \
       VM_MSGINV(fid,rid,VM_pid,dest,_sc,buf,_mcount,dt); @/            \
    VM_sc[dest]++;                                                      \
  }

#define VM_sendrecv(fid,rid,sbuf,scount,sdt,dest,stag,                  \
                    rbuf,rcount,rdt,src,rtag,comm,stat)                 \
  {                                                                     \
    ulong _old_level = VM_level;                                        \
    /@ assert VM_SR1: VM_state == VM_Active; @/                         \
    /@ assert VM_SR2: rid==VM_currentRegion; @/                         \
    /@ assert VM_SR3: comm==MPI_COMM_WORLD; @/                          \
    if (dest != MPI_PROC_NULL) {                                        \
      unsigned _sc = VM_sc[dest];                                       \
      int _mcount = VM_MCOUNT(fid,rid,src,VM_pid,_sc);                  \
      ulong _slevel = VM_SLEVEL(fid,rid,VM_pid,dest,_sc);               \
      /@ assert VM_SR4: scount == _mcount; @/                           \
      /@ assert VM_SR5: \valid_read((sbuf)+(0 .. _mcount-1)); @/        \
      /@ assert VM_SR6: sdt==VM_MDTYPE(fid,rid,VM_pid,dest,_sc); @/     \
      /@ assert VM_SR7: 0<=dest<VM_NP; @/                               \
      /@ assert VM_SR8: stag==VM_MSGTAG(fid,rid,VM_pid,dest,_sc); @/    \
      ulong _new_level = VM_SLEVEL(fid,rid,VM_pid,dest,_sc);            \
      /@ assert VM_SR9: _new_level > _old_level; @/                     \
      VM_level = _new_level;                                            \
      /@ assert VM_SR10:                                                \
         VM_MSGINV(fid,rid,VM_pid,dest,_sc,sbuf,_mcount,sdt); @/        \
      VM_sc[dest]++;                                                    \
    }                                                                   \
    if (src != MPI_PROC_NULL) {                                         \
      unsigned _rc = VM_rc[src];                                        \
      int _mcount = VM_MCOUNT(fid,rid,src,VM_pid,_rc);                  \
      ulong _rlevel = VM_SLEVEL(fid,rid,src,VM_pid,_rc);                \
      /@ assert VM_SR11:  _mcount <= rcount; @/                         \
      /@ assert VM_SR12:  \valid((rbuf)+(0 .. _mcount-1)); @/           \
      /@ assert VM_SR13:  rdt == VM_MDTYPE(fid,rid,src,VM_pid,_rc); @/  \
      /@ assert VM_SR14:  0<=src<VM_NP; @/                              \
      /@ assert VM_SR15:  rtag==MPI_ANY_TAG ||                          \
           rtag==VM_MSGTAG(fid,rid,src,VM_pid,_rc[src]); @/             \
      ulong _new_level = VM_SLEVEL(fid,rid,src,VM_pid,_rc);             \
      /@ assert VM_SR16: _new_level > _old_level; @/                    \
      if (_new_level > VM_level) VM_level = _new_level;                 \
      /@ assert VM_SR17: stat==MPI_STATUS_IGNORE; @/                    \
      /@ admit VM_MSGINV(fid,rid,src,VM_pid,_rc,rbuf,_mcount,rdt); @/   \
      VM_rc[src]++;                                                     \
    }                                                                   \
  }

/*@ requires VM_state == VM_Active;
    requires comm == MPI_COMM_WORLD;
    requires \valid(size);
    terminates \true;
    ensures \result==0 && *size == VM_NP;
    assigns *size;
    exits \false;
 */
int MPI_Comm_size(MPI_Comm comm, int * size);

/*@ requires VM_state == VM_Active;
    requires comm == MPI_COMM_WORLD;
    requires \valid(rank);
    terminates \true;
    ensures \result==0 && *rank == VM_pid;
    assigns *rank;
    exits \false;
 */
int MPI_Comm_rank(MPI_Comm comm, int * rank);

/*@ requires VM_state == VM_Uninitialized;
    requires argc == NULL && argv == NULL;
    terminates \true;
    ensures \result==0 && VM_state == VM_Active;
    assigns VM_state;
    exits \false;
*/
int MPI_Init(int *argc, char ***argv);

/*@ requires VM_state == VM_Active;
    terminates \true;
    ensures \result==0 && VM_state == VM_Finalized;
    assigns VM_state;
    exits \false;
*/
int MPI_Finalize(void);

#endif
\end{lstlisting}

\end{document}